\title{Fair Sets of Some Classes of Graphs\thanks{Pre-print, to be submitted elsewhere.}
\author{Ram~Kumar R\thanks{The work of the author was supported by the University Grants Commission (UGC),
Govt. of India, under their FIP scheme.}\\
        Research~Scholar\\
        Department of Computer~Applications\\
        Cochin University of Science and Technology\\
        e-mail:ram.k.mail@gmail.com \and
        Kannan Balakrishnan\\
         Associate~Professor\\
         Department of Computer~Applications\\
           Cochin University of Science and Technology\\
           e-mail:mullayilkannan@gmail.com \and
           Prasanth G. Narasimha-Shenoi\\Department of Mathematics\\
           Government College, Chittur\\
           Palakkad-678104, India\\
           e-mail:prasanthgns@gmail.com}}
           \date{ }
\newtheorem{Theorem}{Theorem}
\newtheorem{Lemma}{Lemma}
\newtheorem{Corollary}{Corollary}
\newtheorem{Proposition}{Proposition}
\newtheorem{example}{Example}
\newtheorem{conjecture}{Conjecture}
\begin{document}
\maketitle
\begin{abstract}
Given a non empty set $S$ of vertices of a graph, the partiality of a vertex with respect to $S$ is the difference between maximum and minimum of the distances of the vertex to the vertices of $S$.  The vertices with minimum partiality constitute the fair center of the set. Any vertex set which is the fair center of some set of vertices is called a fair set. In this paper we prove that the induced subgraph of any  fair set is connected in the case of trees and characterise block graphs as the class of chordal graphs for which the induced subgraph of all fair sets are connected. The fair sets of $K_{n}$, $K_{m,n}$, $K_{n}-e$, wheel graphs, odd cycles and symmetric even graphs are identified. The fair sets of the Cartesian product graphs are also discussed.
 \end{abstract}
 \textbf{Keywords}: Partiality, Fair Center, Fair Set, Block Graphs, Chordal Graphs, Symmetric Even Graphs.

 \section{Introduction}
The main objective in any location theory problem is to identify the ideal locations for setting up a  facility for a set of customers. This problem is approached in different ways. The first one is called the efficiency oriented model where the objective is to minimise the sum of the total distance covered. This corresponds to the median problem in graph theory\cite{key2,key1,rkkb-12}. When we are looking for locating undesirable facilities such as nuclear reactors, garbage dumps etc \cite{key21,key22,key23} the objective becomes maximization of the sum of distances instead of minimization. This corresponds to the antimedian problem in graph theory \cite{key25,key24}.  Another approach is the effectiveness oriented model and this corresponds to the center problem in graph theory. This model is useful in locating emergency facilities such as ambulance, fire stations etc, \cite{key3}. The center problem in graph theory has been studied extensively and various variants of centers have been introduced from time to time \cite{key5,key4}. We also have the anicenter problem in graph theory \cite{chang} which is again used in obnoxious facility location problems. A third approach is the equity oriented model where equitable locations are chosen, that is locations which  are more or less equally fair to all the customers. Issue of equity is relevant in locating public sector facilities where distribution of travel distances among the recipients of the service is also of importance.

Most of the equity based study of location theory concentrate either on comparisons of different measures of equity \cite{key11} or on giving algorithms for finding the equitable locations\cite{key6,key7,key8,key9,key10}.  Also in many optimization problems, we have a set of optimal vertices.  If we want to choose among these, one of the important criteria can be equity or fairness. Taking a leaf out of these literature we define an equity measure called partiality(termed as range in \cite{key10}) and the sets of equitable locations corresponding to this measure in graphs, defined to be fair sets. Rest of the paper is divided as follows, In the section preliminaries, we have some definitions, and results required for the paper.  In section~\ref{cfs}, we prove a result regarding connectedness of fair sets in trees, extend the result to block graphs, a generalization of tress, and finally charcterise chordal graphs, whose fair sets are connected. In Section \ref{fscg}, we discuss fair sets in some graph families namely complete graphs, complete bipartite graphs, odd cycles,  wheel graphs and symmetric even graphs. We also have some partial results in the case of Cartesian Product of graphs in Section~\ref{cp}. In the last section~\ref{con} we sum up giving the possibilities of some future work.

\section{Preliminaries}
We consider only finite simple undirected connected graphs. For the graph $G$, $V(G)$ denotes its vertex set and $E(G)$ denotes its edge set. When the underlying graph is obvious we use $V$ and $E$ for $V(G)$ and $E(G)$ respectively. For two vertices $u$ and $v$ of $G$, distance between $u$ and $v$ denoted by $d(u,v)$, is the number of edges in the shortest $u-v$ path. A vertex $v$ of a graph $G$ is called a cut-vertex if $G-v$ is no longer connected. Any maximal induced subgraph of $G$ which does not contain a cut-vertex is called a \emph{block} of $G$. A graph G is a \emph{block graph} if every block of $G$ is complete. A graph $G$ is \emph{chordal} if every cycle of length greater than three has a chord; namely an edge connecting two non consecutive vertices of the cycle. Trees, $k$-trees, interval graphs, block graphs are all examples of chordal graphs. Chordal graphs form a well studied class of graphs as they have a very unique clique-pasted structure and many of its properties are inherited by the corresponding clique-trees \cite{key13,key14,key15}. For two vertices $u$ and $v$ of $G$, distance between $u$ and $v$ denoted by $d_G(u,v)$ (if $G$ is obvious, then we write $d(u,v)$), is the number of edges in a shortest $u-v$ path. The set of all vertices which are at a distance $i$ from the vertex $u$ is denoted by $N_i(u)$. The \emph{eccentricity} $e(u)$ of a vertex $u$ is $\max\limits_{\substack{v \in V(G)}} d(u,v)$. A vertex $v$ is an \emph{eccentric vertex} of $u$ if $e(u)= d(u,v)$.  A vertex $v$ is an eccentric vertex of $G$ if there exists a vertex $u$ such that $e(u)=d(u,v)$. A graph is a \emph{unique eccentric vertex} graph if every vertex has a unique eccentric vertex. The unique eccentric vertex of the vertex $u$ is denoted by $\bar{u}$. The diameter of the graph $G$, diam(G), is $\max\limits_{\substack{u \in V(G)}} e(u)$. Two vertices $u$ and $v$ are said to be diametrical if $d(u,v)=diam(G)$.  The interval $I(u,v)$ between vertices $u$ and $v$ of $G$ consists of all vertices which lie in some $u-v$ shortest path.  A vertex $u$ of a graph $G$ is called a universal vertex if $u$ is adjacent to all other vertices of $G$. The \emph{Cartesian product} $G\Box H$ of two graphs $G$ and $H$ has vertex set, $V(G) \times V(H)$, two vertices $(u,v)$ and $(x,y)$ being adjacent if either $u=x$ and $vy \in E(H)$ or $ux \in E(G)$ and $v=y$. For more on graph products see \cite{key20}.\\ 
For any $x \in V$ and $S \subseteq  V$, $min(x,S)$ $(max(x,S))$ denote the minimum(maximum) of the distances between $x$ and the vertices of $S$. For an $x \in V$ and $S \subseteq V$, $ max(x,S)-min(x,S)$ is defined to be the \emph{partiality} of $x$ with respect to $S$ and is denoted by $f_G(x,S)$.  For a given nonempty vertex set $S$, $\{v \in V: f_G(v,S) \le f_G(x,S)$ $\forall x \in V\}$ is defined as the \emph{fair center} of $S$ and is denoted by $FC(S)$. Any $A \subseteq V$ such that $A=FC(S)$ for  some $S \subseteq V$, $|S|>1$, is called a \emph{fair set} of $G$. When the underlying graph is obvious we use the notation $f(x,S)$ instead of $f_G(x,S)$.  When $|S|=1$, we can easily see that for all $x\in V$, $f(x,S)=0$ so, $FC(S)=V$.

\section{Graphs with Connected Fair Sets}\label{cfs}
In this section we characterize those chordal graphs for which the subgraph induced by fair sets are connected.
\begin{Lemma}
 For any tree $T$, the subgraph induced by any fair set is connected.
 \label{lem1}
\end{Lemma}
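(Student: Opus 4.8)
The plan is to prove something slightly stronger than connectedness, namely that $FC(S)$ is \emph{convex} in $T$: if $u,v \in FC(S)$, then every vertex on the unique $u$--$v$ path lies in $FC(S)$. In a tree this immediately yields connectedness of the induced subgraph, since the (unique) path between two vertices of a convex set is then contained in the set. So fix a nonempty $S$, let $m = \min_{x \in V} f(x,S)$ so that $FC(S) = \{x : f(x,S) = m\}$, take $u,v \in FC(S)$, and let $w$ be any internal vertex of the $u$--$v$ path. Since $m$ is the global minimum we have $f(w,S) \ge m$; thus it suffices to establish the reverse inequality
\[
 f(w,S) \le \max\{f(u,S),\, f(v,S)\} = m ,
\]
after which $f(w,S)=m$ and $w \in FC(S)$.

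To prove this inequality I would first record the one genuinely tree-specific fact: if $w$ lies on the $u$--$v$ path, then for every $s \in V$ the vertex $w$ lies on the $u$--$s$ path or on the $v$--$s$ path (possibly both). This holds because the internal vertex $w$ is a cut-vertex separating $u$ from $v$; if $s$ lies in the same component of $T-w$ as $v$, then the $u$--$s$ path must cross $w$, and otherwise the $v$--$s$ path must cross $w$. Consequently, whenever $w$ lies on the $u$--$s$ path we get the \emph{exact} identity $d(u,s) = d(u,w) + d(w,s)$, and symmetrically for $v$.

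The heart of the argument is to apply this to the right witnesses. Let $s^{+} \in S$ attain $\max(w,S)$ and $s^{-} \in S$ attain $\min(w,S)$, so that $f(w,S) = d(w,s^{+}) - d(w,s^{-})$. By the fact above, $w$ lies on the $u$--$s^{+}$ path or on the $v$--$s^{+}$ path; say it is the former, so that $d(u,s^{+}) = d(u,w) + d(w,s^{+})$ holds with equality. For the nearest witness $s^{-}$ I use only the triangle inequality $d(u,s^{-}) \le d(u,w) + d(w,s^{-})$, which always holds. Subtracting, the common term $d(u,w)$ cancels and
\[
 f(u,S) \ge d(u,s^{+}) - d(u,s^{-}) \ge d(w,s^{+}) - d(w,s^{-}) = f(w,S).
\]
If instead $w$ lies on the $v$--$s^{+}$ path, the same computation carried out at $v$ gives $f(v,S) \ge f(w,S)$. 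Either way $f(w,S) \le \max\{f(u,S),f(v,S)\}$, which completes the proof.

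The step I expect to be the main obstacle is precisely the asymmetry in the last display. The naive approach of bounding $\max(\cdot,S)$ and $\min(\cdot,S)$ separately fails: along the path $\max(\cdot,S)$ is a maximum of distance functions and behaves convexly, but $\min(\cdot,S)$ is a minimum and is not concave, so $f=\max-\min$ need not be convex or even quasiconvex. The resolution is to refuse to treat the two terms symmetrically, using the exact geodesic identity for the \emph{farthest} witness $s^{+}$ (where the tree structure is essential) while using only the inequality for the \emph{nearest} witness $s^{-}$, so that the unknown $d(u,w)$ cancels with the correct sign. Recognizing this pairing, rather than attempting a global convexity estimate, is the crux.
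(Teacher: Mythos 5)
Your proof is correct, and it takes a genuinely different route from the paper's. The paper argues dynamically: it walks along the $u$--$v$ path vertex by vertex, uses the tree fact that the distance from a fixed vertex changes by exactly one across an edge (so the partiality of consecutive path vertices differs by $0$, $+1$, or $+2$), and then case-analyzes the increments to show that once the partiality rises along the path it can never fall, contradicting $f(v,S)=f(u,S)=m$. Your argument is static: you fix the extremal witnesses $s^{+},s^{-}$ at the intermediate vertex $w$ and transfer them to an endpoint via the cut-vertex fact that $w$ lies on the $u$--$s^{+}$ or $v$--$s^{+}$ geodesic, pairing the exact additivity $d(u,s^{+})=d(u,w)+d(w,s^{+})$ with the mere triangle inequality for $s^{-}$ so that $d(u,w)$ cancels and $f(u,S)\ge f(w,S)$ in one line. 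Your version buys rigor and brevity: the paper's bookkeeping of which vertex of $S$ currently realizes the max or min as one moves (its Subcases 2.1 and 2.2, and the assertion that ``$d_k$ always increases'') is exactly the delicate part, and your witness-pairing sidesteps it entirely while explicitly delivering the stronger conclusion of convexity of $FC(S)$. What the paper's approach buys in exchange is a local monotonicity principle for partiality along tree paths (``once it increases it cannot decrease''), which the authors reuse verbatim in Proposition 2 when they pass to the skeleton tree of a block graph; if you wanted to recover that extension from your argument, you would note that your cut-vertex separation fact is the only tree-specific ingredient, and it transfers to block graphs through the skeleton construction in the same way.
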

\begin{proof} Let $A$ be a fair set with $A=FC(S)$ where $S=\{v_1,\dots,v_k\}$. Let $u,v \in A$.
Assume that $ v_1,\ldots,v_k $ are such that $d(u,v_1) \le \dots\le d(u,v_k) $. Let $ P$ be the path $u u_1 \dots u_m v$.
At each stage as we move from $u$ to $v$ through the path $P$, let $d_1,\ldots,d_k$ denote the distance between the corresponding vertex of the path and 
$v_1,\dots,v_k$ respectively. At $u$, $f(u,S) =d_k-d_1$. 
Since in any tree, the distances of two adjacent vertices from a given vertex differ by one, we have $f(u_1,S)$ is either $f(u,S)$ or $f(u,S)+ 1$ or $f(u,S)+ 2$.  To prove $f(u_1,S)=f(u,S)$, we consider the following cases.  

\textbf{Case 1}: $f(u_1, S) =f(u, S)+2$.\\
We first consider $f(u_1, S) =f(u, S)+ 2$.  This is possible only when $d_k$ increases by one and $ d_1 $ decreases by one as we move from $u$ to $ u_1$.  Therefore, as we traverse from $u$ to $v$ through $P$, and the graph is a tree, $d_k$ always increase by $1$, so that the partiality cannot decrease.  Hence $f(v, S)>f(u, S)$ which is a contradiction to the assumption that $u,v\in A$. 

\textbf{Case 2}: $f(u_1, S)=f(u,S)+1$.

\textbf{Subcase 2.1:} As we move from $u$  to $ u_1$, $d_k$ increases by one and the role of $v_1$ is taken by some other vertex say $v_2$.  Then similar to the Case 1, we can see that $f(v,S)>f(u,S)$, and a contradiction is obtained.

\textbf{Subcase 2.2:} The role of $v_k$ is taken by another vertex, (say) $v_{k-1}$, so that the maximum distance remains the same(here $d_{k-1}$) and $d_1$
decreases by one. Now as we move from $u_1$ to $ u_2$, since there was an increase in $d(u_1,v_{k-1})$ as compared to $d(u,v_{k-1})$  the maximum
distance keeps on increasing so that the partiality becomes non decreasing. Hence the $f(v, S)>f(u, S)$, a contradiction to our assumption that $u,v\in A$.

From the contradictions of Cases 1 and 2, we obtain $f(u,S)=f(u_1,S)$.  So that $u_1\in A$ and in a similar fashion we can show that $V(P)\subseteq A$. 
Since $u$ and $v$ are arbitrary vertices of $A$, we can see that $A$ is connected.  Hence, we have the lemma.
\end{proof}
\begin{Proposition}
In a block graph the induced subgraph of any fair set is connected.
\label{cor1}
\end{Proposition}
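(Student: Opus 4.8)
The plan is to mimic the proof of Lemma~\ref{lem1}, substituting for the property of trees used there (that the distances of two adjacent vertices from a fixed vertex differ by exactly one) an appropriate property of block graphs. First I would record two structural facts: in a block graph the shortest path between two vertices is unique, and a geodesic meets each block in at most one edge, since a clique has diameter one and no shortest path ever uses two edges inside a single block. Combining these yields the replacement fact I need: for any vertex $w$ and any shortest path $P\colon u=u_0,u_1,\dots,u_m=v$, the sequence $d(u_0,w),\dots,d(u_m,w)$ changes by $-1$, $0$, or $+1$ at each step and is convex, that is, non-increasing, then constant on at most one step at the bottom, then non-decreasing. The constant step, which realizes a difference of $0$ between adjacent vertices, is the genuinely new feature absent in trees; it occurs exactly when $w$ lies in the block containing the current edge but is not one of its two endpoints.

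Next I take $u,v\in A=FC(S)$, set $c=f(u,S)=f(v,S)$ equal to the minimum value of the partiality over $V$, and let $P$ be the unique shortest $u-v$ path; the aim is to prove $f(u_i,S)=c$ for every $i$, whence $V(P)\subseteq A$ and, $u$ and $v$ being arbitrary, $A$ is connected. Writing $a_i=min(u_i,S)$ and $b_i=max(u_i,S)=\max_{j} d(u_i,v_j)$, the convexity fact shows that $b_i$, being a pointwise maximum of convex sequences, is itself convex along $P$. I would then replay the case analysis of Lemma~\ref{lem1} at the first step along $P$ where the partiality strictly increases: convexity of $b_i$ guarantees that once the vertex of $S$ that achieves the maximum distance begins to recede it can never again approach, so the partiality, once it rises above $c$, cannot return to $c$ at $u_m$---contradicting $f(v,S)=c$. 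Since $c$ is the global minimum we also have $f(u_i,S)\ge c$ throughout, so this forces $f(u_i,S)=c$ everywhere on $P$. It is important that the argument leans on the convexity of $b_i$ together with the global minimality of $c$, and not on convexity of $f(\cdot,S)$ itself, which in general fails.

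The hard part will be the bookkeeping forced by the constant (difference-$0$) steps, which have no counterpart in the tree proof. Such a step can swap the vertex of $S$ achieving the maximum or the minimum distance without changing its value, and I must check in each resulting subcase that the partiality still cannot dip back to $c$ after once exceeding it; this augments the two cases of Lemma~\ref{lem1} with an extra subcase. I would also note why the tempting shortcut of reducing to the tree case fails: if one replaces each clique block by a star joining its vertices to a new apex, the resulting graph is a tree but distances are doubled only between the original vertices, while a new apex may acquire strictly smaller partiality---for instance the apex of a single triangle, with $S$ taken to be its three vertices, has partiality $0$ whereas each original vertex has partiality $2$---so the fair center of the tree need not restrict to $FC(S)$. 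This distortion is precisely what makes the direct argument along the unique shortest path necessary.
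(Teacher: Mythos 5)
Your proposal takes a genuinely different route from the paper's. The paper does not argue directly in $G$ at all: it forms the skeleton $S_G$ of the block graph (each block replaced by a star through a new block-vertex), uses $d_{S_G}(v_i,v_j)=2\,d_G(v_i,v_j)$, hence $f_{S_G}(v_i,S)=2f_G(v_i,S)$ for original vertices, and then applies to the skeleton path joining two fair-center vertices not Lemma~\ref{lem1} itself but the monotonicity inside its proof (in a tree, once partiality increases along a path it never decreases); consecutive original vertices on that path lie in a common block, hence are adjacent in $G$. Notably, this is exactly the ``tempting shortcut'' you rejected. Your apex objection is factually correct---in your triangle example the block-vertex has partiality $0$ against $2$ for the original vertices, so the skeleton's fair center need not correspond to $FC(S)$---but it is harmless to the reduction, because the paper never needs the skeleton's fair center: it needs only the no-decrease-after-increase property along the path together with the lower bound $f_{S_G}(v_i,S)\ge 2p$ at \emph{original} vertices, and apexes with small partiality affect neither. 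So the reduction buys reuse of Lemma~\ref{lem1} with almost no new case analysis, while your direct argument buys a self-contained metric proof in $G$ and isolates a convexity property of block-graph geodesics of independent interest.

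Your route does work, but one step needs more than you state. Your structural facts are all true: block graphs are geodetic, a geodesic uses at most one edge per block, and for fixed $w$ the sequence $d(u_i,w)$ along a geodesic is strictly decreasing, then at most one flat step, then strictly increasing. However, convexity of $b_i=max(u_i,S)$ \emph{alone} does not rule out the partiality rising above $c$ and returning: abstractly $b=(3,3,3)$ is convex and $1$-Lipschitz, $a=(1,0,1)$ is $1$-Lipschitz, yet $f=(2,3,2)$ rises and returns. What saves you is that $b_i$ inherits the at-most-one-flat-step property from the individual sequences, by a witness argument: if $b_{i-1}=b_i=b_{i+1}=M$, pick $w^*$ attaining $b_i=M$; convexity of its own sequence gives $d(u_{i-1},w^*)+d(u_{i+1},w^*)\ge 2M$, so both equal $M$ and $w^*$ has two flat steps, impossible in a block graph. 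With that refinement your chase closes cleanly and without subcases: if $f_j>c$ while $f_0=f_m=c$, there are steps $s<j\le t$ with $f_{s+1}-f_s\ge 1$ and $f_{t+1}-f_t\le -1$; since $a_i$ is $1$-Lipschitz this forces $b_{s+1}-b_s\ge 0$ and $b_{t+1}-b_t\le 0$, and convexity of $b$ squeezes both differences to $0$, giving two distinct flat steps in $b$---a contradiction. So your outline is completable; just replace the announced replay of the two cases of Lemma~\ref{lem1} by this one-flat-step lemma for the maximum.
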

\begin{proof} Let $G=(V,E)$  be a block graph. Let $v_1, v_2 ,\ldots,v_n $ be the vertices of $G$. Let $B_1,B_2,\ldots ,B_r$ be the blocks of $G$. From this form the skeleton $S_G$ of G whose vertices are $ v_1,\ldots,v_n ,B_1 ,\ldots,B_r.$ Here we draw an edge between $v_i$ and $B_j $ if $v_i \in B_j.$ For any block graph $G$, its skeleton $S_G$ is  a tree \cite{key16}.

\begin{figure}[H]
\begin{center}
 \includegraphics[scale=0.15]{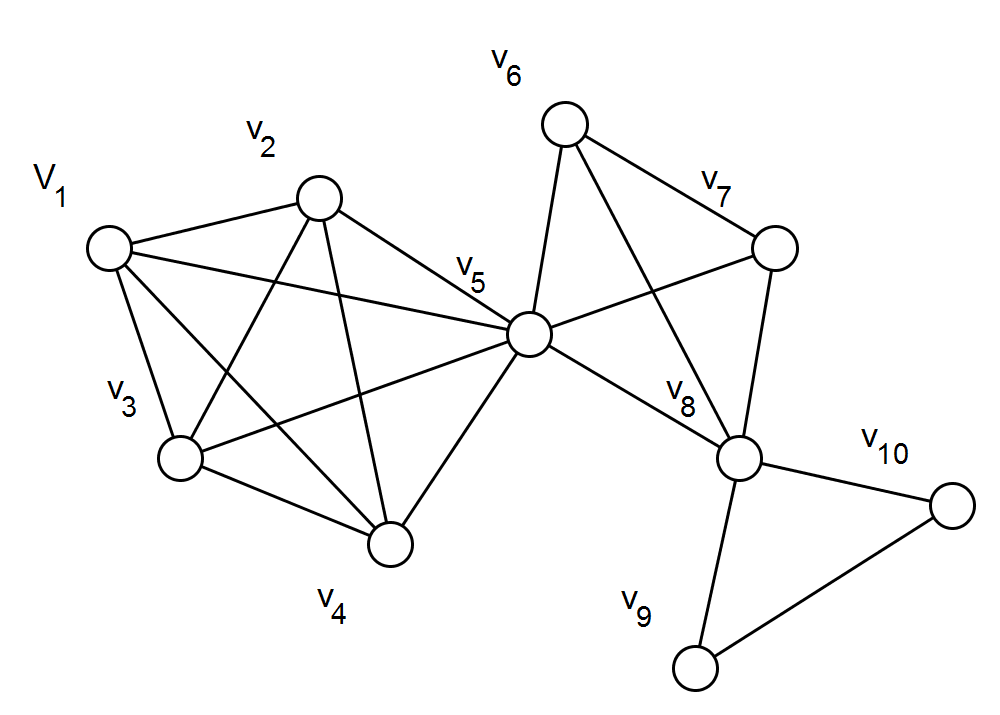}
     \includegraphics[scale=0.15]{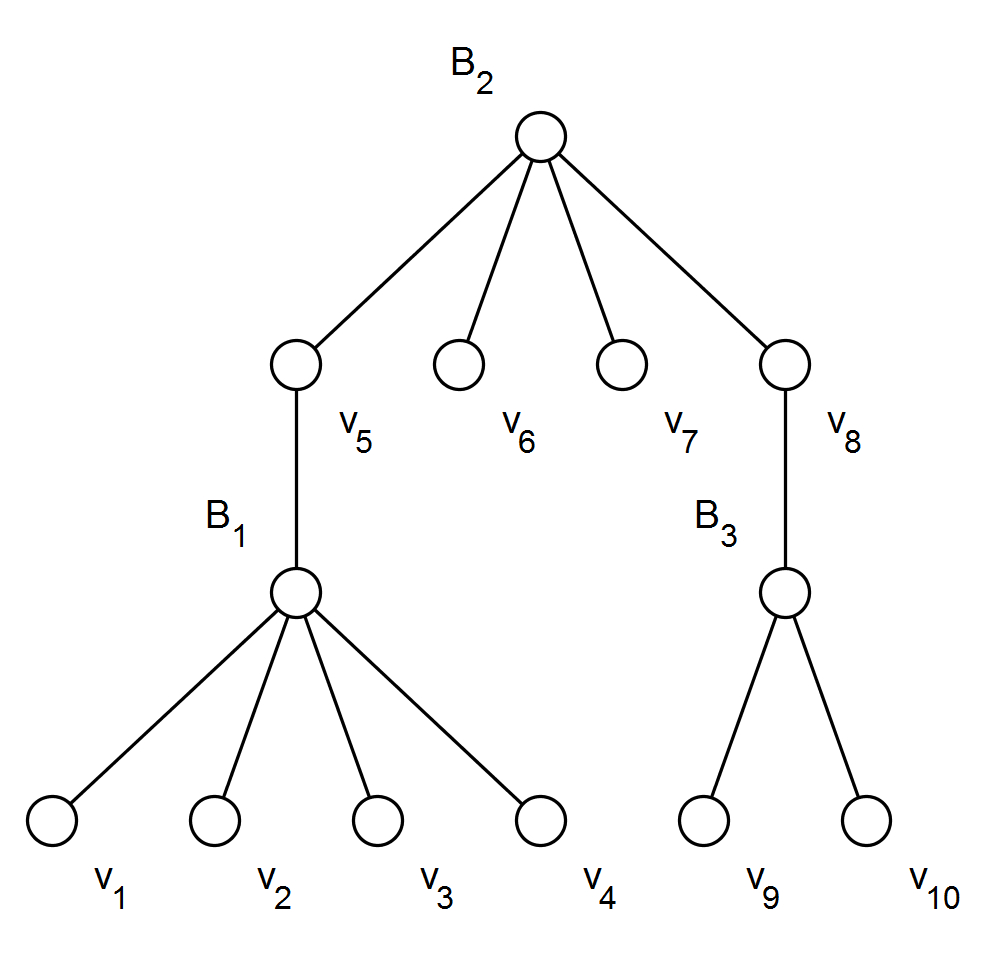}\label{blockkskel}
     \caption{A block graph and its skeleton graph}
     \end{center}
       \end{figure}
Also if $d_G(v_i,v_j)=d$ then $d_{S_G}(v_i,v_j)=2d$. If $S=\{v_1,\ldots,v_k \}$ is a subset of $V(G)$, then for any vertex $v_i$, partiality $f_G(v_i,S)=\frac{1}{2}f_{S_G}(v_i,S)$.  Hence if $v_l \in FC(S)$ with $f_G(v_l,S)=p$, then $f_{S_G}(v_l,S)=2p$. Also for every $v_i \ne v_l$, $f_{S_G}(v_i,S)  \ge 2p$. Now, let  $v_m$ be another vertex in $ G$ such that $f_G(v_m,S)=p$. Then $ f_{S_G}(v_l,S)=2p$, $f_G(v_m,S)=2p$ and $f_{S_G}(v_i,S)\ge 2p$ for every $i=1,\ldots,n$.  Since $S_G$ is connected there exists one path connecting $v_l$ and $v_m$ in $S_G$, say $v_l B_l v_{l+1} B_{l+1}\ldots B_{m-1} v_m$. Since we know that in a tree as we move along a path once partiality increases it cannot decrease $f_{S_G}(v_i,S)\le 2p ,i=l+1,\ldots,m-1$. But since partiality always greater than or equal to $ 2p$, $f_{S_G}(v_i,S)=2p, i=l,l+1,\ldots,m-1,m.$
Therefore $f_G(v_i,S)=p$, $i=l,l+1,\ldots,m-1,m$. Since $v_l$ and $v_{l+1}$ are adjacent to $B_l$ they belong to same block  in G. Therefore $v_l$ and $v_{l+1}$ are adjacent in $G$. Similarly $v_{l+1}$ and $v_{l+2}$ are adjacent in $G$. Hence we get a path $v_l,v_{l+1},\ldots,v_{m-1} v_m$ in $G$ connecting $v_l$ and $v_m$ all of whose partiality is $p$, the minimum. Therefore induced subgraph of any fair set is connected.
\end{proof}
\begin{Theorem}\cite{key17}
A graph $G$ is chordal if and only if it can be constructed recursively by pasting along complete subgraphs, starting from complete graphs.
\end{Theorem}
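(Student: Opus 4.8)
The plan is to read the statement as the assertion that $G$ is chordal if and only if either $G$ is complete or $G = G_1 \cup G_2$ where $G_1,G_2$ are obtainable by the same recursive process and $G_1 \cap G_2$ induces a complete subgraph, and then to prove each implication by an appropriate induction. I would first record the elementary closure fact that every induced subgraph of a chordal graph is again chordal, since any chordless cycle of length at least four surviving in a subgraph would already witness non-chordality in $G$; this is what lets the induction hypothesis descend to the pieces.

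For the ``if'' direction (constructed $\Rightarrow$ chordal) I would induct on the number of pasting steps. A complete graph has no induced cycle of length at least four, so the base case is immediate. For the inductive step write $G = G_1 \cup G_2$ with $G_1 \cap G_2 = S$ complete, and let $C$ be any cycle of length at least four in $G$. If $C$ lies inside a single $G_i$ it has a chord by hypothesis. Otherwise $C$ meets both $V(G_1)\setminus S$ and $V(G_2)\setminus S$; since $S$ separates these sets, $C$ must contain at least two vertices of $S$. Two of them are non-consecutive on $C$: if the only two $S$-vertices of $C$ were adjacent along $C$, then deleting that edge would leave an $S$-avoiding path confined to one side, forcing $C$ into a single $G_i$ and contradicting the choice of $C$. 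Two non-consecutive vertices of $S$ are adjacent in $G$ because $S$ is complete, and this edge is the required chord.

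The ``only if'' direction is the substantial one, and its crux is the lemma that in a chordal graph every minimal separator induces a complete subgraph. Granting this, I would induct on $|V(G)|$: if $G$ is complete we are done; otherwise pick non-adjacent $a,b$ and a minimal $a$--$b$ separator $S$, let $C_1$ be the component of $G-S$ containing $a$, and set $G_1 = G[V(C_1)\cup S]$ and $G_2 = G[V(G)\setminus V(C_1)]$. Then $G_1 \cap G_2 = G[S]$ is complete by the lemma, both pieces are proper induced subgraphs (hence chordal and strictly smaller), and no edge of $G$ joins $C_1$ to another component, so $G = G_1 \cup G_2$; the induction hypothesis applied to $G_1$ and $G_2$ exhibits $G$ as a pasting along $S$.

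The main obstacle, where I expect the real work, is establishing that a minimal $a$--$b$ separator $S$ of a chordal graph is a clique. The approach: by minimality each vertex of $S$ has a neighbor in the $a$-component and in the $b$-component of $G-S$; so, given $x,y \in S$, I can choose shortest $x$--$y$ paths $P_a$ and $P_b$ whose interiors lie in these two distinct components. Concatenating them yields a cycle of length at least four with internally disjoint halves, and chordality supplies a chord. Ruling out chords within either geodesic (a shortest path is induced) and between the two interiors (they sit in different components of $G-S$) forces the chord to be $xy$ itself, so $x$ and $y$ are adjacent and $S$ is complete. The delicate points are verifying that the concatenated cycle genuinely has length at least four and that the two interiors are internally disjoint, both of which follow from the separating property and the component structure.
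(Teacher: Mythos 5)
The paper does not prove this statement at all --- it is quoted from Diestel \cite{key17} --- and your argument is precisely the standard proof from that source: the ``if'' direction by induction on pasting steps, and the ``only if'' direction by induction on $|V(G)|$ via Dirac's lemma that every minimal separator of a chordal graph induces a complete subgraph, proved by joining two shortest paths through the two components. Your proof is correct, with the one small reading that ``a shortest path is induced'' must mean shortest within $G[\{x,y\}\cup V(C_a)]$ (respectively $G[\{x,y\}\cup V(C_b)]$), so that the only chord not excluded is $xy$ itself --- which is exactly the conclusion you draw.
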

\begin{Theorem}\label{block}
Let $G$ be a chordal graph. Then $G$ is a block graph if and only if the induced subgraph of any fair set of $G$ is connected.
\end{Theorem}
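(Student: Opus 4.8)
The plan is to prove the two directions separately and to notice that one of them is already available. The forward direction — if $G$ is a block graph then every fair set induces a connected subgraph — is exactly Proposition~\ref{cor1}, so nothing remains to be done there. For the converse I would argue by contraposition: assuming $G$ is chordal but \emph{not} a block graph, I will exhibit a single set $S$ with $|S|>1$ whose fair center $FC(S)$ induces a disconnected subgraph.

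The starting point is structural. Since $G$ is not a block graph it has a block that is not complete, and such a block is $2$-connected, chordal and non-complete. Two non-adjacent vertices of that block cannot be separated by a single vertex (a $2$-connected block has no cut vertex), so a minimal separator $T$ of $G$ arising from this block has $|T|\ge 2$; and in a chordal graph every minimal separator is a clique, so $T$ is a clique. Let $C_p$ and $C_q$ be two full components of $G-T$ (minimality of $T$ guarantees at least two components in each of which every vertex of $T$ has a neighbour). Using the clique-tree structure of chordal graphs — a maximal clique containing $T$ on each side must contain a vertex lying outside $T$ — I would pick $p\in C_p$ and $q\in C_q$ with $T\subseteq N(p)$ and $T\subseteq N(q)$. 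Then $p\not\sim q$, since they lie in different components of $G-T$, and $T$ separates $p$ from $q$.

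With this pair fixed I take $S=T$. Because every vertex of $T$ is adjacent both to $p$ and to $q$, we get $f(p,S)=f(q,S)=1-1=0$; since partiality is never negative, $0$ is the global minimum, and hence $FC(S)=\{x\in V : x \text{ is equidistant from every vertex of } T\}$, a set containing both $p$ and $q$. On the other hand, for any $t\in T$ we have $f(t,S)\ge 1$ (distance $0$ to itself, distance $1$ to another vertex of the clique $T$), so no vertex of $T$ lies in $FC(S)$. As $T$ separates $p$ from $q$, every $p$--$q$ path in $G$ meets $T$ and therefore leaves $FC(S)$; consequently $p$ and $q$ sit in distinct components of the subgraph induced by $FC(S)$, which is thus disconnected. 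This produces the required disconnected fair set and completes the contrapositive, and hence the theorem.

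I expect the genuine difficulty to be concentrated in the second paragraph: manufacturing a pair $p,q$ whose common neighbourhood is a \emph{separating} clique. The danger is a ``private'' $p$--$q$ path bypassing $T$, or an extra common neighbour lying outside $T$, either of which could reconnect the two halves of $FC(S)$; ruling these out is precisely where chordality is used decisively, through the facts that minimal separators are cliques and that each full side of a minimal separator contains a vertex adjacent to all of it. Once the pair is in place, the distance bookkeeping showing $f(p,S)=f(q,S)=0$ while $f(t,S)\ge 1$ on $T$ is routine.
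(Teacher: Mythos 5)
Your proof is correct, and its endgame is the same as the paper's: both arguments take as $S$ a separating clique of size at least two, find vertices $p$ and $q$ on opposite sides of it that are adjacent to \emph{every} vertex of $S$, compute $f(p,S)=f(q,S)=1-1=0$ while $f(t,S)\ge 1$ for all $t\in S$, and conclude that $FC(S)$ contains $p$ and $q$ but is disjoint from the separator, so the induced subgraph is disconnected (the forward direction is, as you say, just Proposition~\ref{cor1}). Where you genuinely diverge is in how the key configuration is produced, and your route is the more solid one. The paper invokes the recursive characterization of chordal graphs (pasting along complete subgraphs), takes some pasting of $G_1$ and $G_2$ along $H$ with $|V(H)|>1$, and simply \emph{asserts} that each side contains a vertex adjacent to all of $H$. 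That assertion fails for an arbitrary pasting: for instance, in the diamond $\{x,y,h_1,h_2\}$ (with $xy$ the missing edge) plus a pendant vertex $a$ attached to $x$, one may paste along $H=\{x,h_1\}$, and then the side containing $a$ has no vertex adjacent to all of $H$; even the existence of a pasting with $|V(H)|>1$ is not argued in the paper. Your second paragraph supplies exactly the missing justification: a non-complete block is $2$-connected, so a minimal $a$--$b$ separator $T$ for a non-adjacent pair inside it has $|T|\ge 2$; Dirac's theorem makes $T$ a clique; minimality makes the components of $a$ and $b$ full; and the clique-tree fact (a minimal separator is the intersection of two adjacent maximal cliques, each properly containing it) yields $p$ and $q$ adjacent to all of $T$ on opposite sides. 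The cost is importing this standard minimal-separator machinery, but it turns the paper's sketch into a proof. One small remark: the ``danger'' you flag of a common neighbour of $p$ and $q$ outside $T$ is a red herring --- once $T$ separates $p$ from $q$ in $G$ and $T\cap FC(T)=\emptyset$, every $p$--$q$ path leaves $FC(T)$, and nothing further needs to be ruled out.
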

\begin{proof}
Suppose $G$ is a block graph. Then by Corollary \ref{cor1}, for any $S \subseteq V$ the induced subgraph of $FC(S)$ is connected. Conversely assume that the induced subgraph of all fair sets of $G$ is connected and let  $G$ is not a block graph. Since $G$ is chordal, there exist two  chordal
graphs $G_1$ and $G_2$ such that $G$ can be got by pasting $G_1$ and $G_2$ along a complete subgraph say, $H$, where $|V(H)|>1$. Then there exists two vertices $u$ and $v$ such that $u \in V(G_1)\setminus V(H)$, $v \in V(G_2) \setminus V(H)$ and $u$ and $v$ are adjacent to all vertices of $H$. Consider the vertex set $V(H)$. Since $u$ and $v$ are adjacent to all vertices of $H$, $f(u,V(H))=f(v,V(H))=1-1=0$. For all $x \in V(H)$, $f(x,V(H))=1$. Hence $FC(V(H))$ contains the vertices $u$ and $v$ and any path from $u$ to $v$ pass through the vertices of $H$ which have partiality one. In other words the induced subgraph of the fair center of $V(H)$ is not connected, a contradiction. Therefore induced subgraph of all fair sets of $G$ is connected implies $G$ is a block graph.
\end{proof}
As an illustration of Theorem~\ref{block}, we have the following example.
\begin{example}
For $V(H)=\{v_3,v_4,v_5\}$, we have $A=FC(V(H))=\{v_2,v_6\}$, the induced subgraph of $A$ is not connected.  
\begin{center}
\begin{figure}[H]\label{chordal}
\centering
\includegraphics[scale=0.25]{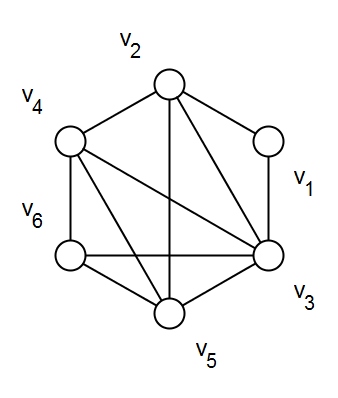}
\caption{$G_1=G[\{v_1,v_2,v_3,v_4,v_5\}]$, $G_2=G[\{v_3,v_4,v_5,v_6\}]$, $H=G[\{v_3,v_4,v_5\}]$}
\end{figure}
\end{center}
\end{example}
\section{Fair sets of some class of graphs}\label{fscg}
In this section, we find out the fair sets of some class of graphs, namely Complete graphs, $K_n-e$, $K_{m,n}$, the wheel graphs $W_n$, odd cycles and, symmetric even graphs.  Before that we have the following lemma.
\begin{Lemma}
For any graph $G=(V,E)$ all the  fair sets $A$ of $G$ are of cardinality either $|V|$ or less than $|V|-1$.
\end{Lemma}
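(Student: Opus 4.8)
The plan is to show that a fair set can never have cardinality exactly $|V|-1$; equivalently, that the partiality function $f(\cdot,S)$ cannot attain its minimum on all but exactly one vertex. The central observation I would establish first is that the partiality is bounded above by the maximum pairwise distance within $S$. Writing $D=\max_{s,s'\in S} d(s,s')$, for any vertex $x$ and any $s,s'\in S$ the triangle inequality gives $d(x,s)-d(x,s')\le d(s,s')\le D$, hence $f(x,S)=\max(x,S)-\min(x,S)\le D$. Since $|S|>1$ we have $D\ge 1$, and I can fix distinct $a,b\in S$ with $d(a,b)=D$.

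Next I would pin down the vertices that realize this upper bound. For $a\in S$ we have $\min(a,S)=0$, while $\max(a,S)=\max_{s\in S} d(a,s)$ satisfies $D=d(a,b)\le \max(a,S)\le D$, the last inequality because every $d(a,s)$ is itself a pairwise distance among vertices of $S$ and hence at most $D$. Thus $\max(a,S)=D$ and therefore $f(a,S)=D$; the same argument gives $f(b,S)=D$. So the diametral pair $a,b$ both attain the maximum possible partiality $D$.

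With these facts the dichotomy is immediate. Let $p=\min_{x\in V} f(x,S)$ be the value defining $FC(S)=A$. If $p=D$, then since $f(x,S)\le D=p$ for all $x$, every vertex has partiality exactly $p$, so $A=V$ and $|A|=|V|$. Otherwise $p<D$, in which case $f(a,S)=f(b,S)=D>p$, so neither $a$ nor $b$ lies in $A$; as $a\ne b$ this yields $|A|\le |V|-2<|V|-1$. In either case $|A|\ne |V|-1$, which is exactly the claim.

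I expect the only real content, and the step most easily overlooked, to be the uniform upper bound $f(x,S)\le D$ together with the identification of the diametral pair of $S$ as maximizers; once that is in hand, the two-case conclusion is routine. The one point to handle carefully is verifying $\max(a,S)=D$ rather than merely $\max(a,S)\ge D$, which uses that the maximum defining $D$ ranges over all pairs of $S$, including those involving $a$.
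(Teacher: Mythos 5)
Your proof is correct, but it takes a genuinely different route from the paper's. The paper argues by contradiction: assuming $|A|=|V|-1$ with a unique excluded vertex $y$ of partiality exceeding the common value $k$ on $A$, it splits on whether $y\in S$. If $y\in S$, it produces a second vertex $z\in S$ with $d(y,z)>k$, whence $f(z,S)>k$ and $z\notin A$, contradicting uniqueness of $y$; if $y\notin S$, it moves from $y$ to a neighbour $w$ on a shortest path to the nearest vertex of $S$ and derives $|\max(y,S)-\max(w,S)|\ge 2$, contradicting the fact that distances from adjacent vertices differ by at most one. Your argument is global rather than local: you bound $f(x,S)\le D=\max_{s,s'\in S}d(s,s')$ by the triangle inequality, verify that a diametral pair $a,b$ of $S$ attains this bound exactly (the step you rightly flag, $\max(a,S)=D$, does need that every $d(a,s)$ is a pairwise distance within $S$), and then read off the dichotomy: either the minimum partiality $p$ equals $D$, in which case $FC(S)=V$, or $p<D$, in which case $a$ and $b$ are both excluded and $|FC(S)|\le |V|-2$. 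What your approach buys: it is direct rather than by contradiction, it never uses connectivity of $G$ or the unit-step behaviour of graph distance across an edge (so it transfers verbatim to any metric space), and it yields strictly more information --- namely that $FC(S)=V$ if and only if every vertex has partiality exactly $\mathrm{diam}(S)$, and that otherwise the fair center misses an entire diametral pair of $S$ itself. What the paper's approach buys: its neighbour-on-a-shortest-path argument is the same adjacency-based technique used in Lemma~1 for trees, so it keeps the toolkit uniform across the paper. Both proofs correctly rely on $|S|>1$ (in your case to guarantee $D\ge 1$ and hence $a\ne b$).
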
 
\begin{proof}
Let $A$ be fair set of $G$ and assume that $A\not= |V|$.  To prove $|A|<|V|-1$.  If possible let $|A|=|V|-1$. Let $A=FC(S)$ where $S \subseteq V$. Let $y$ be the vertex which is not in $A$.  For each $x \in A$ let $f(x,S)=k$.  Also we have $f(y,S)>k$. \\
If $y \in S$ then we have $min(y,S)=0$.  So we must have $max(y,S)>k$.  Therefore there exists an $z$ in $S$ such that $d(y,z) >k$ and this implies that $z \notin A$, a contradiction to the fact that $|A|=|V|-1$. Hence for each $x$ in $S$, $f(x,S)=k$.\\
Next let $y \notin S$. Let $min(y,S)=r$ and $max(y,S)=k+r+s$ where $r,s>0$. Since $min(y,S)=r$ there exists a vertex $w$ adjacent to $y$ such that $min(w,S)=r-1$. Since $f(w,S)=k$ we have $max(w,S)=k+r-1=k+r+s-(s+1)=max(y,S)-(s+1)$. Since $s\ge 1$, we have $|max(y,S)-max(w,S)|\ge 2$, a contradiction. 
\end{proof}

\begin{Proposition}\label{complete}
For the complete graph on $n$ vertices $K_n$, any $A\subseteq V$ such that $|A| \neq n-1$, is a fair set.
\end{Proposition}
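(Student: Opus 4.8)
The plan is to exploit the fact that in $K_n$ every two distinct vertices are at distance exactly $1$, so the partiality function takes only the values $0$ and $1$. First I would fix an arbitrary $S \subseteq V$ with $|S| > 1$ and compute $f(x,S)$ for each $x \in V$. If $x \notin S$, then both $min(x,S)$ and $max(x,S)$ equal $1$, giving $f(x,S) = 0$. If $x \in S$, then $min(x,S) = 0$, while $max(x,S) = 1$, the latter because $|S| \ge 2$ forces a second vertex of $S$ lying at distance $1$ from $x$; hence $f(x,S) = 1$.

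From this dichotomy I would read off $FC(S)$ at once. When $S \neq V$ the value $0$ is attained, so the minimum partiality is $0$ and $FC(S) = V \setminus S$; when $S = V$ every vertex has partiality $1$, so $FC(S) = V$. Thus the fair sets of $K_n$ are precisely the complements $V \setminus S$ for $2 \le |S| \le n-1$, together with $V$ itself.

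To finish, given a target set $A$ with $|A| \neq n-1$, I would exhibit a witnessing $S$. If $A = V$, I take $S = V$, for which the computation gives $FC(S) = V = A$, and $|S| = n > 1$. Otherwise $A$ is a nonempty proper subset with $1 \le |A| \le n-2$ (the empty set being excluded, as $FC(S)$ is always nonempty), and I set $S = V \setminus A$. Then $|S| = n - |A| \ge 2 > 1$ and $S \neq V$, so the earlier computation yields $FC(S) = V \setminus S = A$, showing $A$ is a fair set.

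The argument is essentially a single computation, so I expect no deep obstacle; the only delicate point is the boundary bookkeeping. One must check that the constructed $S$ always satisfies $|S| > 1$, and this is exactly where the hypothesis $|A| \neq n-1$ enters: a target of size $n-1$ would force $|S| = 1$, which the definition of a fair set forbids. It is also worth separating the degenerate case $A = V$, handled by $S = V$, from the generic case, where the bound $|A| \le n-2$ is precisely what keeps $|V \setminus A| \ge 2$.
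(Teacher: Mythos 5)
Your proof is correct and takes essentially the same route as the paper's: compute that every $x \in S$ has $f(x,S)=1$ while every $y \notin S$ has $f(y,S)=0$, conclude $FC(S)=S^{c}$, and realize a target $A$ as $FC(A^{c})$, with the hypothesis $|A| \neq n-1$ guaranteeing $|A^{c}| \ge 2$. If anything, your treatment of the case $A=V$ via $S=V$ (where every vertex has partiality $1$, so $FC(V)=V$) is more careful than the paper's, which appeals to $|S|=1$ even though the paper's own definition of a fair set requires the witnessing set to satisfy $|S|>1$.
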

\begin{proof}Let $S \subseteq V$ with $|S| >1$. Then for every $x \in S$, $f(x,S)= 1-0 =1$ and for every $y \notin S$, $f(y,S)=1-1=0$. Therefore $FC(S)= S^{c}$. Also if $|S|=1$ then $FC(S)= V$. Hence all $A \subseteq V$ such that $|A| \neq n-1$ is a fair set.
\end{proof}
We get a different result if we delete an edge from a complete graph.  
\begin{Proposition}
Let $G$ be the graph $K_n-e$ with $V(G)=\{v_1,\ldots,v_n\}$ and let $e$ be the edge $v_1v_2$. Then $A \subseteq V$ is a fair set if and only if $|A| \neq n-1$ and either $\{v_1,v_2\} \subseteq A$ or $\{v_1,v_2\} \subseteq A^{c}$.
\end{Proposition}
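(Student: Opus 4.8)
The plan is to exploit the very simple metric of $G = K_n - e$: since only the pair $v_1, v_2$ fails to be adjacent, $d(v_1, v_2) = 2$ while $d(v_i, v_j) = 1$ for every other pair of distinct vertices. The whole proposition then rests on one symmetry observation, namely that $f(v_1, S) = f(v_2, S)$ for every $S$ with $|S| > 1$, which I would establish by a short case analysis on $S \cap \{v_1, v_2\}$. If $v_1, v_2 \notin S$, every vertex of $S$ lies in $\{v_3, \ldots, v_n\}$ at distance $1$ from both, so $f(v_1, S) = f(v_2, S) = 0$; if $v_1, v_2 \in S$, each of them sees the other at distance $2$ and itself at $0$, giving $f(v_1, S) = f(v_2, S) = 2$. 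The only subtle case is exactly one of them in $S$, say $v_1 \in S$, $v_2 \notin S$: here $min(v_1, S) = 0$ and $max(v_1, S) = 1$, while $min(v_2, S) = 1$ (some vertex of $S$ lies in $\{v_3, \ldots, v_n\}$ because $|S| > 1$) and $max(v_2, S) = d(v_2, v_1) = 2$, so again $f(v_1, S) = f(v_2, S) = 1$.

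For the necessity direction, suppose $A = FC(S)$ for some $S$ with $|S| > 1$. By the preceding Lemma every fair set has cardinality $|V|$ or strictly less than $|V| - 1$, so $|A| \neq n - 1$. Moreover, since $f(v_1, S) = f(v_2, S)$, the vertex $v_1$ attains the minimum partiality if and only if $v_2$ does; hence $v_1 \in A \iff v_2 \in A$, which is precisely the assertion that $\{v_1, v_2\} \subseteq A$ or $\{v_1, v_2\} \subseteq A^c$.

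For sufficiency I would realise each admissible $A$ as a fair center by exhibiting a suitable $S$, splitting according to the two alternatives and treating the extreme cardinalities separately. If $\{v_1, v_2\} \subseteq A$ with $|A| \le n - 2$, I set $S = A^c \subseteq \{v_3, \ldots, v_n\}$; then $|S| \ge 2$, every vertex of $S$ has partiality $1$ and every other vertex has partiality $0$, so $FC(S) = S^c = A$. If $\{v_1, v_2\} \subseteq A$ with $|A| = n$ (so $A = V$), the choice $S = A^c = \emptyset$ is illegal, and instead I take $S = V \setminus \{v_2\}$: a direct check shows every vertex then has partiality $1$, whence $FC(S) = V = A$. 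If instead $\{v_1, v_2\} \subseteq A^c$, i.e. $A \subseteq \{v_3, \ldots, v_n\}$, I take $S = \{v_1, v_2\} \cup (\{v_3, \ldots, v_n\} \setminus A)$; then $f(v_1, S) = f(v_2, S) = 2$, the vertices of $S \cap \{v_3, \ldots, v_n\}$ have partiality $1$, and the vertices of $A$ have partiality $0$, so $FC(S) = A$.

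The step I expect to be the main obstacle is ensuring the sufficiency constructions produce exactly $A$ rather than an accidentally larger fair center: the identity $FC(S) = S^c$ (respectively $FC(S) = A$) is valid only when the set of partiality-$0$ vertices is nonempty, and this fails precisely at the extreme sizes. This is why $A = V$ must be witnessed by $S = V \setminus \{v_2\}$ instead of $S = A^c$, and why the nonemptiness of $A$ — fair centers are always nonempty, so the degenerate case $A = \emptyset$ does not arise — must be kept in view throughout.
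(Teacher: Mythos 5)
Your proof is correct, and its necessity direction takes a genuinely different route from the paper's. On sufficiency you and the paper essentially coincide: both realise $A$ as $FC(A^{c})$, with the same partiality computations, in the two configurations $\{v_1,v_2\}\subseteq A$ (there $S=A^{c}\subseteq\{v_3,\ldots,v_n\}$ gives partiality $1$ on $S$ and $0$ off it) and $\{v_1,v_2\}\subseteq A^{c}$ (there $v_1,v_2$ get partiality $2$, the rest of $S$ gets $1$, and $A$ gets $0$). For necessity, however, the paper argues by exhaustively computing $FC(B)$ according to the position of $B$ relative to $\{v_1,v_2\}$ --- claiming $FC(B)=B^{c}$ when $\{v_1,v_2\}\subseteq B$ or $\{v_1,v_2\}\subseteq B^{c}$, and $FC(B)=B^{c}\setminus\{v_2\}$ when only $v_1\in B$ --- and then checking that none of these can equal a forbidden $A$; you instead isolate the single symmetry identity $f(v_1,S)=f(v_2,S)$ for every $S$ with $|S|>1$, which immediately yields $v_1\in FC(S)$ if and only if $v_2\in FC(S)$, and you import $|A|\neq n-1$ from the paper's general cardinality lemma (which the paper proves just before this proposition but, curiously, re-derives by hand instead of citing). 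Your symmetry device is the same one the paper itself uses later for $W_5$, and it buys uniformity: the paper's stated identities fail at the extremes ($FC(V)=V\setminus\{v_1,v_2\}\neq\emptyset=V^{c}$, and $FC(V\setminus\{v_2\})=V$, not $\emptyset$), and the paper silently omits the configuration $v_2\in B$, $v_1\notin B$; the final conclusion happens to survive these lapses, but your lemma never encounters them. You also close a real gap on the sufficiency side: the paper only treats $\{v_1,v_2\}\subseteq A$ with $|A|<n-1$, leaving $A=V$ without a witness, whereas your choice $S=V\setminus\{v_2\}$, for which every vertex has partiality exactly $1$, correctly exhibits $V$ as a fair set; and your remark that $A=\emptyset$ vacuously satisfies the stated conditions yet is never a fair center flags an implicit nonemptiness hypothesis the paper leaves unstated.
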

\begin{proof}Let $\{v_1,v_2\} \subseteq A$ with $|A| <n-1$. Then $|A^{c}| \ge 2$. For each $x \in A$, $ f(x,A^{c})=1-1=0$. For each $x \in A^{c}, f(x,A^{c})=1-0=1$. Therefore $FC(A^{c})=A$. Now, let $\{v_1,v_2\} \subseteq A^{c}$. For each $x \in A$, $f(x,A^{c})=1-1=0$. $f(v_1,A^{c})=2-0=2$, $f(v_2,A^{c})=2-0=2$ and for every other $x$ in $A^{c}$, $f(x,A^{c})=1-0=1$. Hence $FC(A^{c})=A$.\\
Conversely, Let $A$ be a fair set.  We first prove that $|A|\neq n-1$.  If $|A|= n-1$ then $|A^c|=1$  so we have $FC(A^c)=V$.  If $B$ is any set such that $FC(B)=A$ then $|B|>1$.  If $\{v_1,v_2\}\subseteq B$, then $FC(B)=B^c\neq A$.  If $v_1\in B\cap A$ and $v_2\notin B$ then $FC(B)=B^c\setminus \{v_2\}\neq A$.  If $\{v_1,v_2\}\subseteq B^c$, then again $FC(B)=B^c\neq A$.  Hence $|A|\neq n-1$.\\
Now let us assume that there is a set $B$ with $FC(B)=A$.  Suppose 
neither $\{v_1,v_2\} \subseteq A$ nor $\{v_1,v_2\} \subseteq A^{c}$.  With out loss of generality, we assume $v_1\in A$ and $v_2\notin A$. If $\{v_1,v_2\}\subseteq B$, then $FC(B)=B^c\neq A$.   If $v_1\in B\cap A$ and $v_2\notin B$ then $FC(B)=B^c\setminus \{v_2\}\neq A$.  If $\{v_1,v_2\}\subseteq B^c$, then again $FC(B)=B^c\neq A$.  From these we arrive at a contradiction to our assumption that $FC(B)=A$.  Hence our supposition is wrong so either $\{v_1,v_2\} \subseteq A$ or $\{v_1,v_2\} \subseteq A^{c}$.  Hence the proposition.
\end{proof}

Now, we prove case for complete bipartite graph $G=K_{m,n}$ with bipartition $(X,Y)$ where $|X|=m$ and $|Y|=n$.  We assume $A=A_1 \cup A_2$ where $A_1 \subseteq X$ and $A_2 \subseteq Y$.  
\begin{Proposition}\label{completebip}
Let $G$ be a complete bipartite graph $K_{m,n}$ with bipartition $(X,Y)$ where $|X|=m$ and $|Y|=n$. 
Let $A=A_1 \cup A_2$ where $A_1 \subseteq X$ and $A_2 \subseteq Y$. Then $A$ is a fair set if and only if  $|A_1| \neq m-1$ and $|A_2| \neq n-1$.
\end{Proposition}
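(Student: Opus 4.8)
The plan is to push everything through the two nonzero distances in $K_{m,n}$: vertices in different parts are at distance $1$ and two distinct vertices of the same part are at distance $2$. So for any $S$ I would write $S = S_1 \cup S_2$ with $S_1 \subseteq X$, $S_2 \subseteq Y$, put $a = |S_1|$, $b = |S_2|$, and first compute $f(x,S)$ for each of the four vertex types (inside/outside $S$, in $X$ or in $Y$) purely in terms of whether $a,b$ are $\geq 1$ or $\geq 2$. Two elementary facts will drive the whole argument: (i) if $x \in X$ and $x \notin S_1$ then $x$ only sees distances $1$ and $2$, so $f(x,S) \leq 1$; and (ii) if $x \in S_1$ then $\min(x,S)=0$ while $\max(x,S) \geq 1$ (since $|S|>1$ forces a second vertex of $S$ at distance $\geq 1$), so $f(x,S) \geq 1$. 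Writing $\phi_S$ for the common partiality of a vertex of $S_1$ and $\phi_N$ for that of a vertex of $X\setminus S_1$, this reads $\phi_S \geq 1 \geq \phi_N$, and the symmetric statement holds in $Y$.

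For the \emph{only if} direction I would analyse $A_1 = FC(S)\cap X$. All vertices of $S_1$ share the value $\phi_S$ and all vertices of $X\setminus S_1$ share $\phi_N$, so $FC(S)\cap X$ can only be $\emptyset$, $S_1$, $X\setminus S_1$, or $X$; hence $|A_1| \in \{0, a, m-a, m\}$. The case $FC(S)\cap X = S_1$ would require $\phi_S < \phi_N$, contradicting $\phi_S \geq 1 \geq \phi_N$, so in fact $|A_1| \in \{0, m-a, m\}$. To obtain $|A_1| = m-1$ one would need $m-a = m-1$, i.e. $a=1$, together with $\phi_N < \phi_S$. But a single vertex in $S_1$ forces $b \geq 1$ (because $|S|>1$), and then both a vertex of $S_1$ and a vertex of $X\setminus S_1$ have partiality exactly $1$, so $\phi_N = \phi_S$ and $FC(S)\cap X = X$, giving $|A_1| = m$, never $m-1$. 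Thus $|A_1| \neq m-1$, and by the symmetry of the two parts $|A_2| \neq n-1$.

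For the \emph{if} direction I would exhibit, for each admissible size pair $(|A_1|,|A_2|)=(p,q)$ with $p\neq m-1$, $q\neq n-1$, an explicit $S$ with $FC(S)=A$; since both parts are vertex-transitive, only the sizes matter. The main construction is $S = A^{c} = (X\setminus A_1)\cup(Y\setminus A_2)$: when $1\leq p\leq m-2$ and $1\leq q\leq n-2$ both deleted sets have size $\geq 2$, every vertex of $A$ gets partiality $1$ and every vertex of $A^{c}$ gets partiality $2$, so $FC(S)=A$; the subcases $p=0$ or $q=0$ of this same $S$ collapse the minimum onto exactly the surviving part and still give $A$. The extremal sizes are handled separately: for $A_1=X$ with $q\leq n-2$ take $S=Y\setminus A_2$, which makes every vertex of $X\cup A_2$ achieve the minimum partiality while the deleted $Y$-vertices are pushed to $2$; symmetrically $S=X\setminus A_1$ realizes $A_2=Y$ with $p\leq m-2$; and for $A=V$ any $S$ with one vertex in each part makes all partialities equal to $1$.

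I expect the main obstacle to be the bookkeeping in the \emph{if} direction, namely verifying that each boundary construction produces exactly $A$ rather than a larger or smaller fair center, and treating the degenerate value $A=\emptyset$, which is never a fair set because $FC(S)$ is always nonempty, so the stated equivalence is to be read over nonempty $A$. The conceptual core, however, is the one-line inequality $\phi_S \geq 1 \geq \phi_N$ of the \emph{only if} part, which is precisely what makes the excluded cardinalities $m-1$ and $n-1$ genuinely unreachable.
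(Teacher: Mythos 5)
Your proof is correct in substance and splits naturally against the paper. Your \emph{if} direction is essentially the paper's: the paper proves the whole proposition by a six-case exhaustive computation of $FC(A^{c})$ for every size type of $A$ (so its ``only if'' is also implicit exhaustion over all $S$), and your constructions $S=A^{c}$, $S=Y\setminus A_2$, $S=X\setminus A_1$ reproduce its Cases 1, 3 and 5--6 almost verbatim (for $A=V$ the paper uses $S=V$ where you use one vertex per part; both work). Your \emph{only if} direction, by contrast, is a genuinely different and more conceptual route: partiality is constant on each of the four classes $S_1$, $X\setminus S_1$, $S_2$, $Y\setminus S_2$, and the inequality $\phi_S\ge 1\ge \phi_N$ pins $FC(S)\cap X$ down to $\emptyset$, $X\setminus S_1$, or $X$ without enumerating size types; this is shorter and explains \emph{why} $m-1$ is unreachable. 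You are also more careful than the paper on the degenerate set: read literally, the paper's Case 1 with $A=\emptyset$ would assert $FC(V)=\emptyset$, whereas $FC(S)$ is always nonempty, so your convention of reading the equivalence over nonempty $A$ is the right repair.

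One concrete hole remains in your \emph{only if} step: it silently assumes $m\ge 2$. From your own list $|A_1|\in\{0,\,m-a,\,m\}$ you rule out only $m-a=m-1$; but when $m=1$ the value $0$ equals $m-1$, so for stars $K_{1,n}$ you still owe the claim that the center lies in every fair center. This does follow from your class analysis --- if $x\in S_1$ then $a=1$ forces $\phi_S=1$ while $\psi_S\ge 1$ and $\psi_N=1$ (here $\psi_N=0$ would force $S_2=\emptyset$, i.e.\ $|S|=1$), so $x$ attains the minimum; and if $S\subseteq Y$ then $\phi_N=0$ --- but as written the step is absent, and symmetrically for $n=1$. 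Relatedly, in your $a=1$ analysis the assertion ``$\phi_N=\phi_S$ and $FC(S)\cap X=X$'' overstates slightly: without checking the $Y$-side values you only know $FC(S)\cap X\in\{\emptyset,X\}$, which is all you need for $m\ge 2$ since neither has cardinality $m-1$. With those two or three lines added, your argument is complete and, in the converse direction, cleaner than the paper's.
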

\begin{proof}
We prove the proposition case by case.\\
\textbf{Case 1:} $|A_1|<m-1$ and $|A_2| <n-1$.\\
Then $A^{c}=(X-A_1) \cup (Y-A_2)$. For, each $x \in A^{c}$, $f(x,A^{c})=2-0=2$ and for each $x \in A$ $f(x,A^{c})=2-1=1$. So, $FC(A^{c})=A$.\\
\textbf{Case 2:} $A=X \cup Y $. \\
We can see that $FC(A)=A$.\\
\textbf{Case 3:} $|A_1|=m$ and $|A_2| < n-1$.\\
Then as in the Case 1, we have $FC(A^{c})=A$.\\
\textbf{Case 4:} $|A_1|=m-1$ and $|A_2|=n-1$.\\
Here $|A^{c}|=2$ let it be $\{x_m,y_n\}$ where $x_m \in X$ and $y_n \in Y$. For each  $x \in A_1$, $f(x,A^{c})=2-1=1$, for each $x \in A_2$, $f(x,A^{c})=2-1=1$. $f(x_m,A^{c})=f(y_n,A^{c})=1-0=1$. So  $FC(A^{c}) = X \cup Y$.\\
\textbf{Case 5:} $|A_1|=m-1$ and $|A_2|<n-1$.\\
Let $A_1=X \setminus \{x_1\}$. For each  $x \in A_1$, $f(x,A^{c})=2-1=1$, for each $x \in A_2$, $f(x,A^{c})=2-1=1$. $f(x_1,A^{c})=1-0=1$ and for each  $x \in Y \setminus A_2$, $f(x,A^{c})=2-0=2$. So $FC(A^{c})= A_1 \cup A_2 \cup \{x_1\}=X \cup A_2$.\\
\textbf{Case 6:} $|A_1|=m-1$ and $|A_2|=n$.\\
Then, we  $FC(A^{c})=X \cup Y$.\\
We can easily see that the cases $1$ to $6$, discusses the fair centers of all types of subsets of $V$.  
Hence the proposition.
\end{proof}
As a simple illustration of Proposition~\ref{completebip}, we have an example which discuss the Case 1 of the proposition. 
\begin{example}
$G=K_{5,4}$, with partitions $X=\{x_1,x_2,x_3,x_4,x_5\}$ and $Y=\{y_1,y_2,y_3,y_4\}$.  By choosing $A_1=\{x_1,x_2,x_3\}$, $A_2=\{y_1,y_2\}$, $A^{c}=\{x_4,x_5,y_3,y_4\}$, we can see that $FC(A^c)=A$.
\begin{figure}[H]
\centering
\includegraphics[scale=0.2]{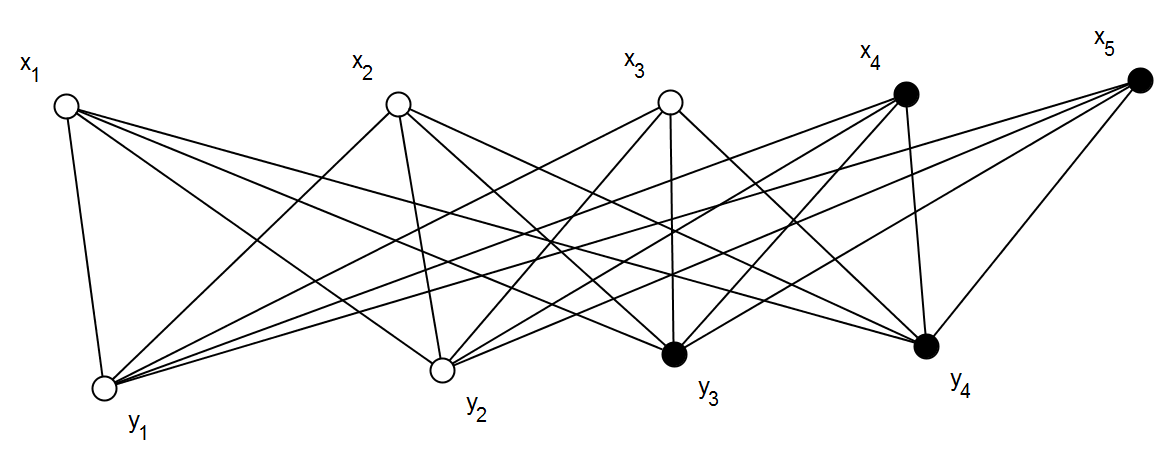}
\label{I}\caption{$K_{5,4}$}
\end{figure}
\end{example}
Now we consider the case when the graph is a wheel $W_n$.  We first prove the case when $n>6$.  
\begin{Theorem}\label{wheel}
 Let $W_{n}$, ($n\ge 6$) be the wheel graph with vertex set $\{v_1,\ldots,v_{n-1},v_n\}$, where $v_n$ is the universal vertex. Let $C_{n-1}$ be the cycle induced by $\{v_1,\ldots,v_{n-1}\}$. Then the fair sets of $W_{n}$ are
    \begin{enumerate}
    \itemsep-12pt
    \item$\{v_i\}$, $ 1 \le i \le n $, \\
\item $\{v_i,v_j\}$ such that $v_i,v_j \in V(C_{n-1})$, $d_{C_{n-1}}(v_i,v_j)=2$,\\
\item $V(W_n)$,\\
\item All sets of the form $A_1\cup \{v_n\}$ where $A_1 \subset V(C_{n-1})$ and $A_1$ is not an induced path of length greater than $n-6$. 
 \end{enumerate}
 \end{Theorem}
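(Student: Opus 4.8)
The plan is to exploit the fact that $W_n$ has diameter $2$: for $n\ge 6$ every distance is $0,1$ or $2$, with $d(v_n,x)=1$ for all $x\neq v_n$, with $d(v_i,v_j)=1$ when $v_i,v_j$ are consecutive on $C_{n-1}$, and $d(v_i,v_j)=2$ otherwise. Consequently the partiality takes only small values and can be read off combinatorially, which I would record first. For the hub, $f(v_n,S)=1$ if $v_n\in S$ and $f(v_n,S)=0$ if $v_n\notin S$. For a rim vertex $x$ (with $|S|>1$), $\max(x,S)=2$ exactly when $S$ contains a rim vertex non-adjacent to $x$, and otherwise $\max(x,S)=1$; while $\min(x,S)$ is $0,1$ or $2$ according as $x\in S$, or $x\notin S$ but $x$ has a neighbour in $S$ (possibly $v_n$), or $S$ lies entirely in the set of rim vertices non-adjacent to $x$.

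Since the hub is within distance $1$ of everything, its partiality controls the minimum, giving a clean dichotomy: $v_n\in FC(S)$ if and only if either \emph{(Way 1)} $v_n\notin S$, in which case the minimum partiality is $0$ and $FC(S)$ is exactly the set of vertices equidistant from all of $S$; or \emph{(Way 2)} $v_n\in S$ and the minimum partiality equals $1$. I would first dispose of the easy types. Taking $S=V(C_{n-1})$ forces every rim vertex to have partiality $2$ and the hub $0$, so $FC=\{v_n\}$; taking $S=\{v_n,v_{i-1},v_{i+1}\}$ pins the single rim vertex $v_i$, giving item~1. For item~2, $S=\{v_n,v_k\}$ yields $FC=\{v_{k-1},v_{k+1}\}$, a rim-distance-$2$ pair, and the partiality formulas show no other pair of rim vertices can arise. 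For item~3, $S=\{v_n\}\cup\{v_a,v_{a+1}\}$ (hub plus one rim edge) makes every vertex have partiality exactly $1$, so $FC=V(W_n)$.

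The heart of the proof is item~4, the hub-containing fair sets other than $V$. Writing $A=A_1\cup\{v_n\}$ and $C=V(C_{n-1})\setminus A_1$, the assertion is equivalent to: $A$ is a fair set iff $C$ is nonempty and is \emph{not} an arc (a set of consecutive rim vertices) of at most three vertices. For the realizability direction I would split on $C$. If $C$ is not contained in any three consecutive rim vertices, take $S=\{v_n\}\cup C$; then every rim vertex sees a vertex of $C$ at distance $2$, so the minimum partiality is $1$, the vertices of $C$ get partiality $2$, and $FC(S)=\{v_n\}\cup A_1$ exactly. The nontrivial case is $C=\{v_a,v_{a+2}\}$ (a distance-$2$ pair): here the direct choice fails because the middle vertex $v_{a+1}$ would acquire partiality $0$. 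The key trick is to \emph{fill the arc}: take $S=\{v_n,v_a,v_{a+1},v_{a+2}\}$, so that $v_{a+1}$ becomes the unique central vertex of the arc (partiality $1$, hence in $A_1$) while the endpoints $v_a,v_{a+2}$ each gain a far partner and get partiality $2$, yielding precisely $C=\{v_a,v_{a+2}\}$.

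For completeness I would show no choice of $S$ can produce a $C$ that is an arc of one, two or three vertices (equivalently, that $A_1\cup\{v_n\}$ is never a fair set when $A_1$ is an induced path of length $>n-6$), by examining both Ways. In Way~1, $FC(S)\cap V(C_{n-1})$ is the set of rim vertices equidistant from $S$, and its complement $C$ is, up to deleting one common neighbour when $|S|=2$, the closed rim-neighbourhood $N[S]$; such a set is either a non-arc or a union of arcs each of at least three vertices, and with $|S|>1$ it is never an arc of at most three vertices (a single $3$-arc would force $|S|=1$). In Way~2, $C$ is the set of non-central vertices of $S\cap V(C_{n-1})$, and a short-arc outcome would force $S\cap V(C_{n-1})$ into a three-vertex window, which under the minimum$=1$ requirement collapses $C$ to the empty set or a distance-$2$ pair, never a genuine arc of size $\le 3$. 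I expect the main obstacle to be exactly this item~4 bookkeeping: getting the partiality counts right so that the arc-filling construction lands on the intended $C$, and organising the completeness argument so that every short arc is provably excluded under both Ways while every other $A_1$ is realized.
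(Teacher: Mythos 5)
Your proposal is correct and takes essentially the same route as the paper: the same realizing sets for items 1--3, the same generic construction $S=\{v_n\}\cup C$ when the complement $C$ of $A_1$ is not confined to three consecutive rim vertices, and the identical arc-filling set $S=\{v_n,v_a,v_{a+1},v_{a+2}\}$ for the distance-$2$-pair complement, followed by the same exclusion of short-arc complements. Your Way~1/Way~2 hub dichotomy is just a tidier packaging of the paper's case analysis (its four special forms of $S$ and observations A--D), and if anything your genericity condition is the cleaner one, since the paper's ``cycle-distance $>2$'' test degenerates on the $5$-cycle rim when $n=6$.
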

 \begin{proof}
 We prove the theorem first for $n>6$ and for the case $n=6$, the proof is similar to the case of $n>6$, so we omit the proof when $n=6$.\\
When $n>6$:\\
We use the notation $v_{i+k}$(or $v_{i-k}$) for $v_{i+k-(n-1)}$(or $v_{i-k+(n-1)}$) when $i+k > n-1$(or $i-k<1)$. First, we prove that the four types of sets described in the theorem are indeed fair sets.
\vspace{-1mm}
\begin{enumerate}[$1.$]
\itemsep-7pt
\item Let $S=\{v_{i-1}, v_n,v_{i+1}\}$, $1\le i\le n-1$. $f(v_i,S)=0$ and for all $u$ other than $v_i$ we have $f(u,S)>0$ so that $FC(S)=\{v_i\}$.  For $S=V$, $f(v_n,S)=1$ and for all $u$ other than $v_n$ we have $f(u,S)=2$, so in this case we can see that $FC(S)=\{v_n\}$. Hence $\{v_i\}$, $1\le i\le n$ are all fair sets. \\
\item Let $S= \{v_n,v_i\}$, $1\le i\le n-1$. $f(v_{i-1},S)=f(v_{i+1},S)=0$ and for all other $u$, $f(u,S) >0$. Hence $FC(S)=\{v_{i-1},v_{i+1}\}$. In other words any $\{v_i,v_j\}$ such that $v_i,v_j \in V(C_{n-1})$, $d_{C_{n-1}}(v_i,v_j)=2$ is a fair set.\\
\item Let $S=\{v_i,v_{i+1},v_n\}$. $f(v_i,S)=f(v_{i+1},S)=f(v_n,S)=1-0=1$. For all other $u$, $f(u,S)=2-1=1$. Hence $FC(S)=V$.\\
\item Now let $S\subseteq V$ be such that $v_n \in S$ and $S$ contains atleast one pair of vertices $v_i$ and $v_j$ such that $d_{C_{n-1}}(v_i,v_j)>2$. Then for every $u \in S$ such that $u \neq v_n$, $f(u,S)=2-0=2$, $f(v_n,S)=1-0=1$ and for every $v \notin S$, $f(v,S)=2-1=1$. Hence $FC(S)=S^c\cup \{v_n\}$. This gives us that for any $A\subseteq V$ such that $v_n\in A$ and $V(C_{n-1})\setminus A$ is none of the following subsets of $V(C_{n-1})$, namely
\begin{enumerate}
\itemsep-10pt
\item $\{v_i\}$, $1\le i\le n-1$.\\
\item $\{v_i,v_{i+1}\}$, $1\le i\le n-1$.\\
\item $\{v_i,v_{i+2}\}$, $1\le i\le n-1$.\label{itemc}\\
\item $\{v_i,v_{i+1},v_{i+2}\}$, $1\le i\le n-1$
\end{enumerate}is a fair set.

The set $A_1 \cup \{v_n\}$ where $A_1$ is the compliment in $V(C_{n-1})$ of a set mentioned in \ref{itemc} above, is the fair center of $\{v_i,v_{i+1},v_{i+2},v_n\}$.
\vspace{-1mm}
\end{enumerate}
Therefore the only sets containing $\{v_n\}$ which have not been identified as fair sets are sets of the type $A_1 \cup \{v_n\}$ where $A_1$ is a path of length greater than $n-6$.
Now let $A=A_1 \cup \{v_n\}$ where $A_1 \subseteq V(C_{n-1})$ forms a path of length greater than $n-6$. Assume there exists an $S \subseteq V$ such that $FC(S)=A$. If $S \subseteq V(C_{n-1})$ then $f(v_n,S)=0$ and it is impossible to have $f(u,S)=0$ for every $u \in A_1$. Hence $S$ cannot be a subset of $V(C_{n-1})$ or $v_n \in S$. If $S$ contains a $v_i$ and $v_j$ of $C_{n-1}$ so that $d_{C_{n-1}}(v_i,v_j)>2$ then $FC(S)= V(C_{n-1}) \setminus S \cup \{v_n\}$. So $FC(S)= A$ implies $V(C_{n-1}) \setminus S= A_1$. We have that $v_i$ and $v_j$, two vertices such that $d_{C_{n-1}}(v_i,v_j)>2$, does not belong to $V(C_{n-1}) \setminus S$. By the choice of $A_1$ such two vertices cannot be simultaneously absent from $A_1$. Hence $V(C_{n-1}) \setminus S \neq A_1$, a contradiction. So assume $S$ does not contain two vertices $v_i$ and $v_j$ such that $d_{C_{n-1}}(v_i,v_j)>2$. Hence $S$ should be any one of the following:
\begin{enumerate}[a)]
\itemsep-7pt
\item $\{v_i,v_n\}$, $1\le i\le n-1$.\\
\item $\{v_i,v_{i+1},v_n\}$, $1\le i\le n-1$.\\
\item $\{v_i,v_{i+2},v_n\}$, $1\le i\le n-1$.\\
\item $\{v_i,v_{i+1},v_{i+2},v_n\}$, $1\le i\le n-1$
\end{enumerate}
But we have already found out the fair centers of all these sets and  none of them have $A$ as its fair center. Hence an $A=A_1 \cup \{v_n\}$ where $A_1 \subseteq V(C_{n-1})$ forms a path of length greater than $n-6$ is not fair set.
Now we have the following observations
 \begin{enumerate}[$A.$]
 \itemsep-10pt
 \item For an $S \subset V(C_{n-1})$, $v_n \in FC(S)$\\
  \item If  $S$ contains $v_i$ and $v_j$ of $C_{n-1}$ where $d_{C_{n-1}}(v_i,v_j)>2$ then $v_n \in FC(S)$\\
 \item Any $A=A_1 \cup \{v_n\}$ is fair set if and only if $A_1$ is not a path of length greater than $n-6$ in $V(C_{n-1})$.\label{itemC}\\
 \item For any $S \subseteq V$ such that $v_n \in S$ and $v_i,v_j \in S$ $\implies$ $d_{C_{n-1}}(v_i,v_j)\leq 2$, fair centers are sets of any of the following forms.\label{itemD}
 \vspace{-1mm} 
 \begin{enumerate}[$i)$]
 \itemsep-10pt
 \item $\{v_i\}$ \\
 \item $\{v_i,v_{i+2}\}$\\
 \item $V(W_n)$\\
 \item $\{v_1,\ldots v_i,v_{i+2},\ldots, v_{n-1},v_n\}$, a set of type described in \ref{itemC}.
 \end{enumerate}
 \end{enumerate}
  Hence the only fair sets of $W_n$ are those described in \ref{itemC} and \ref{itemD} above. Hence the theorem.
 \end{proof}
When $n=4$, we can see that $W_4=K_4$, so we prove the case when the graph is a wheel $W_n$ for $n=5$, where we get a proposition, which is entirely different from the Theorem~\ref{wheel}.
\begin{Proposition}
If $G$ is $W_5$ with $V=\{v_1,v_2,v_3,v_4,v_5\}$, where $v_5$ is adjacent to all other vertices and $v_1v_2v_3v_4v_1$ is the outer cycle, then the only fair sets of $G$ are $\{v_5\},\{v_1,v_3\}, \{v_2,v_4\}, \{v_1,v_3, v_5\}, \{v_2,v_4,v_5\}$ and $V$. 
\end{Proposition}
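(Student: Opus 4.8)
The plan is to exploit the very rigid distance structure of $W_5$ to obtain an explicit formula for the partiality function, after which the fair sets can essentially be read off. First I would record all pairwise distances: since $v_5$ is universal, every pair of vertices lies at distance $1$ except the two ``diagonal'' pairs $\{v_1,v_3\}$ and $\{v_2,v_4\}$ of the outer $4$-cycle, which lie at distance $2$. Writing $\bar v_1=v_3$, $\bar v_3=v_1$, $\bar v_2=v_4$, $\bar v_4=v_2$ for the diagonal partner and letting $[\cdot]$ denote the indicator, I would then verify two partiality formulas, valid for every $S$ with $|S|>1$: for an outer vertex, $f(v_i,S)=[v_i\in S]+[\bar v_i\in S]$, and for the hub, $f(v_5,S)=[v_5\in S]$. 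The outer formula holds because $\min(v_i,S)=0$ exactly when $v_i\in S$ (and equals $1$ otherwise, since $|S|>1$ forces some neighbour of $v_i$ into $S$, $\bar v_i$ being the only non-neighbour), while $\max(v_i,S)=2$ exactly when $\bar v_i\in S$. The hub formula holds because $\max(v_5,S)=1$ always and $\min(v_5,S)=0$ iff $v_5\in S$.

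The key consequence I would extract is that partiality is constant on each of the three groups $\{v_1,v_3\}$, $\{v_2,v_4\}$, $\{v_5\}$. Setting $a=|S\cap\{v_1,v_3\}|$, $b=|S\cap\{v_2,v_4\}|$ and $c=|S\cap\{v_5\}|$, the formulas give $f(v_1,S)=f(v_3,S)=a$, $f(v_2,S)=f(v_4,S)=b$ and $f(v_5,S)=c$. Hence $FC(S)$ is exactly the union of those groups whose value attains $m=\min(a,b,c)$. In particular every fair set is a union of a nonempty sub-collection of these three groups, so there are at most $2^3-1=7$ candidate fair sets.

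It then remains to decide which of these seven unions actually occur, which I would settle by producing an explicit $S$ (with $|S|>1$) realising each one: for instance $FC(\{v_2,v_4\})=\{v_1,v_3,v_5\}$, $FC(\{v_1,v_3\})=\{v_2,v_4,v_5\}$, $FC(\{v_2,v_5\})=\{v_1,v_3\}$, $FC(\{v_1,v_5\})=\{v_2,v_4\}$, $FC(V)=\{v_5\}$, and $FC(\{v_1,v_2,v_5\})=V$, giving the six claimed sets. The only step requiring genuine care---and the real crux of the proposition, which is what makes $W_5$ behave differently from the general wheel of Theorem~\ref{wheel}---is ruling out the seventh candidate $\{v_1,v_2,v_3,v_4\}$. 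This union arises precisely when both diagonal groups but not the hub attain the minimum, i.e. when $a=b=m<c$; since $c\le 1$ this forces $c=1$ and $a=b=0$, which means $S=\{v_5\}$ and violates $|S|>1$. Thus $\{v_1,v_2,v_3,v_4\}$ is never a fair center, and the six sets listed are exactly the fair sets of $W_5$.
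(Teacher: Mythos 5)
Your proposal is correct, and at its core it follows the same route as the paper: both arguments first exploit the diagonal symmetry of $W_5$ to show that partiality is constant on the pairs $\{v_1,v_3\}$ and $\{v_2,v_4\}$ (the paper does this via the antipodal relation $d(v_3,u)=2-d(v_1,u)$, which gives $f(v_3,S)=f(v_1,S)$ directly; you do it via the closed-form $f(v_i,S)=[v_i\in S]+[\bar v_i\in S]$), and both then exhibit explicit sets $S$ realising the six listed fair centers, with choices that differ only cosmetically (the paper uses $FC(\{v_5,v_4\})=\{v_1,v_3\}$ where you use $FC(\{v_2,v_5\})=\{v_1,v_3\}$). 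Where you genuinely go beyond the paper is in the completeness step: the paper's symmetry observation leaves seven candidate unions of the three groups $\{v_1,v_3\}$, $\{v_2,v_4\}$, $\{v_5\}$, lists six realisations, and then simply asserts that ``the fair sets of $W_5$ are precisely those described,'' without ever explaining why the seventh candidate $\{v_1,v_2,v_3,v_4\}$ cannot occur. Your indicator formulas, with $a=|S\cap\{v_1,v_3\}|$, $b=|S\cap\{v_2,v_4\}|$, $c=[v_5\in S]$ and $FC(S)$ equal to the union of groups attaining $\min(a,b,c)$, reduce this to the observation that $a=b<c$ forces $c=1$ and $a=b=0$, i.e.\ $S=\{v_5\}$, which is excluded by the requirement $|S|>1$ in the definition of a fair set. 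So your approach buys a fully rigorous exhaustive enumeration and makes explicit exactly the exclusion the paper glosses over, at the modest cost of verifying the two partiality formulas (which do hold for all $S$ with $|S|>1$, precisely because $|S|\ge 2$ guarantees a vertex at distance $1$ in the relevant cases).
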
   
\begin{proof}
Given a non empty vertex set $S$, let $d_1,d_2,\ldots d_k$ be the distances of the vertex $v_1$ from the vertices of $S$ where $d_1 \le d_2 \le \ldots \le d_k$. Then the distances of $v_3$ from vertices of $S$ are $2-d_k,2-d_{k-1},\ldots, 2-d_1$ where $2-d_k \le 2-d_{k-1}\le ,\ldots, \le 2-d_1$. Hence $f(v_3,S)=d_k-d_1=2-d_1-(2-d_k)=f(v_1,S)$. Hence if $A$ is any fair set, $v_1 \in A$ implies $v_3 \in A$. Similarly $v_2 \in A$ implies $v_4 \in A$. Now $f(v_i,V)=2$ for $1 \le i \le 4$ and $f(v_5,V)=1$. Hence $FC(V)=\{v_5\}$. Similarly we can observe that $FC(\{v_5,v_4\})=\{v_1,v_3\}$, $FC(\{v_5,v_3\})=\{v_2,v_4\}$, $FC(\{v_1,v_3\})=\{v_2,v_4,v_5\}$, $FC(\{v_2,v_4\})=\{v_1,v_3,v_5\}$ and $FC(\{v_1,v_2,v_5\})= V$. Hence the fair sets of $W_5$ are precisely those described in the theorem.
\end{proof} 
\begin{Theorem}\label{oddcycle}
Let the graph $G=C_{2m+1}$ be an odd cycle with vertex set $V=\{v_1, \ldots, v_{2m+1}\}$. $A \subseteq V$ is a fair set if and only if a pair of consecutive vertices say, $v_1$ and $v_{2}$ is in $A$ implies $v_{m+2}$, the vertex which is eccentric to both $v_1$ and $v_{2}$, is also in $A$.
\end{Theorem}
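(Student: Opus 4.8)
The plan is to prove the two implications separately, after fixing coordinates $v_1,\dots,v_{2m+1}$ with indices read modulo $2m+1$ and writing $d(v_i,v_j)=\min(|i-j|,\,2m+1-|i-j|)$. For a fixed $S$ I abbreviate $\mu(x)=\min(x,S)$ and $M(x)=\max(x,S)$, so that $f(x,S)=M(x)-\mu(x)$. The engine of the whole argument is one distance identity for the common eccentric vertex of an edge. Fixing consecutive vertices $v_i,v_{i+1}$ and letting $w=v_{i+m+1}$ be their common eccentric vertex, I claim that for every vertex $s$,
\[
 d(w,s)=m-\min\bigl(d(v_i,s),\,d(v_{i+1},s)\bigr).
\]
I would prove this by cutting the cycle into the two arcs joining the edge $v_iv_{i+1}$ to $w$: on the arc $v_{i+1},\dots,w$ one checks $d(w,s)=m-d(v_{i+1},s)$, on the arc $v_i,\dots,w$ one gets $d(w,s)=m-d(v_i,s)$, and on each arc the near endpoint attains the displayed minimum. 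This is the odd-cycle analogue of the relation $d(v_3,s)=2-d(v_1,s)$ used in the $W_5$ proposition, except that the ``antipode'' is now attached to an edge rather than to a vertex, which is precisely why the statement quantifies over \emph{consecutive pairs}.

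\emph{Necessity ($\Rightarrow$).} Suppose $A=FC(S)$ and let $p$ be the common minimum value of $f(\cdot,S)$ on $A$. If $v_i,v_{i+1}\in A$ I must show $f(w,S)=p$ for $w=v_{i+m+1}$. From the identity, $\max_s d(w,s)=m-\min\bigl(\mu(v_i),\mu(v_{i+1})\bigr)$, while $\min_s d(w,s)=m-\max_s\min(d(v_i,s),d(v_{i+1},s))\ge m-\min\bigl(M(v_i),M(v_{i+1})\bigr)$, the last step because $\min(a,b)\le a$. Subtracting gives
\[
 f(w,S)\le \min\bigl(M(v_i),M(v_{i+1})\bigr)-\min\bigl(\mu(v_i),\mu(v_{i+1})\bigr).
\]
A one-line case check (assume without loss of generality $\mu(v_i)\le\mu(v_{i+1})$ and use $\min(M(v_i),M(v_{i+1}))\le M(v_i)$) shows the right-hand side is at most $\max\bigl(f(v_i,S),f(v_{i+1},S)\bigr)=p$. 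Hence $f(w,S)\le p$, and since $p$ is the global minimum, $f(w,S)=p$ and $w\in A$. Thus every fair set has the stated closure property.

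\emph{Sufficiency ($\Leftarrow$).} For the converse I must exhibit, for each $A$ with the closure property, some $S$ with $FC(S)=A$; this is where the real work lies. The naive guess $S=A^c$ already fails on $C_5$, and experimentation shows the correct witness depends delicately on how $A$ decomposes into maximal runs of consecutive vertices and the gaps between them. The plan is to build $S$ from this run/gap data so that every vertex outside $A$ is forced to strictly larger partiality than every vertex of $A$ (in the generic case one pushes each vertex of $A^c$ up to the diameter $m$ by placing in $S$ a pair realising distance $m$ from it), while fine-tuning $S$ so that exactly the vertices of $A$ share the minimum value. The closure hypothesis is what makes these constraints consistent: a forbidden pair $v_i,v_{i+1}\in A$ with $v_{i+m+1}\notin A$ would, by the identity above, drag $f(v_{i+m+1},S)$ down to the minimum and thereby force $v_{i+m+1}$ into $FC(S)$, contradicting $FC(S)=A$.

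I expect the main obstacle to be producing a uniform $S$ rather than an ad hoc one for each configuration. The examples show this is genuinely subtle: in $C_7$ one has $FC(\{v_3,v_4,v_5,v_6,v_7\})=\{v_1,v_2,v_5\}$ for a single run, but already $FC(\{v_2,v_3,v_4,v_7\})=\{v_1,v_2,v_5,v_6\}$ for two runs, so the witness $S$ need neither be symmetric nor be disjoint from $A$. Consequently I anticipate organising sufficiency as a case analysis on the run lengths relative to $m$ (small runs, long runs, and the boundary cases, together with the degenerate sets such as independent sets, singletons, and $V$ itself), in the spirit of the wheel theorem, with the distance identity of the first paragraph serving as the single verification tool that computes $f(\cdot,S)$ uniformly in every case.
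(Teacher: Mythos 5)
Your necessity argument is correct and in fact tighter than the paper's. The distance identity $d(w,s)=m-\min\bigl(d(v_i,s),d(v_{i+1},s)\bigr)$ for $w=v_{i+m+1}$ is valid (check it arc by arc exactly as you indicate), and the chain $f(w,S)\le \min\bigl(M(v_i),M(v_{i+1})\bigr)-\min\bigl(\mu(v_i),\mu(v_{i+1})\bigr)\le p$ is sound, forcing $w\in FC(S)$. The paper instead splits into two cases according to how $\min(\cdot,S)$ and $\max(\cdot,S)$ shift between $v_1$ and $v_2$ (its Case 2 with distinct far vertices is ruled out because it would give $f(v_{m+2},S)=p-1$, below the minimum); your single identity subsumes both cases and is the cleaner route. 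Your sample computations in $C_7$ are also correct.

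The genuine gap is that you never prove sufficiency: you correctly identify it as ``where the real work lies'' and then only describe a plan, while an ``if and only if'' needs an explicit witness $S$ with $FC(S)=A$ for every $A$ satisfying the closure property. This is precisely the half where the paper does its work. It reduces (after your observation that a run $v_1,\dots,v_k\subseteq A$ forces $v_{m+2},\dots,v_{m+k}\in A$) to a concrete three-step construction: (I) put into $S$ every third vertex of the run, namely $v_2,v_5,\dots,v_{3r-1}$ when $k=3r$ or $3r+1$, and $v_3,v_6,\dots,v_{3r}$ when $k=3r+2$; (II) add those vertices among $v_{m+2},\dots,v_{m+k}$ that are not eccentric to any vertex already placed in $S$; (III) add all of $A^c$. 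One then verifies directly that $f(x,S)=m$ for every $x\notin A$ (each such $x$ lies in $S$, so $\min(x,S)=0$, and at least one of its two eccentric vertices lies in $S$) and $f(x,S)=m-1$ for every $x\in A$, so $FC(S)=A$. Your closing remark that the closure hypothesis ``makes the constraints consistent'' is just the necessity direction restated via your identity; it does not produce $S$. Note also that your run-length case analysis would have to cover the degenerate configurations your own $C_5$ example exposes: $S=A^c$ does work when $A$ is independent (steps I and II are vacuous then), but fails as soon as $A$ contains a run, which is exactly why steps I and II are needed. As submitted, the proposal proves only half the theorem.
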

\begin{proof}
Let $A=FC(S)$ with $v_1,v_2 \in A$. Let $min(v_1,S)=d_{min}$ and $max(v_1,S)=d_{max}$. Since $v_1, v_2 \in A$, $f(v_1,S)= f(v_2,S)$. We shall consider here two different cases.\\
\textbf{Case 1}: $min(v_2,S)=d_{min}+1$ and $max(v_2,S)=d_{max}+1$. Then there exists a vertex $v \in S$ such that $d(v_1,v)=d_{max}$ and $d(v_2,v)=d_{max}+1$. Then $d(v_{m+2},v)= m-d_{max}$. If there exists a vertex $v'\in S$ such that $d(v_{m+2},v')<m-d_{max}$ then $d(v',v_1)>d_{max}$, a contradiction. Hence $min(v_{m+2},S)=d(v_{m+2},v)$. Similarly there exists a vertex $u$ such that $d(u,v_1)=d_{min}$ and $d(u,v_2)=d_{min}+1$. $max(v_{m+2},S)=d(v_{m+2},u)=m-d_{min}$. Therefore $f(v_{m+2},S)=m-d_{min}-(m-d_{max})=d_{max}-d_{min}=f(v_1,S)=f(v_2,S)$. \vspace{2mm}That is $v_{m+2} \in A$.\\
\textbf{Case 2}: $min(v_2,S)=d_{min}$ and $max(v_2,S)=d_{max}$. Let $u$ and $u'$ be such that $min(v_1,S)=d(v_1,u)$ and $min(v_2,S)=d(v_2,u')$. $max(v_{m+2},S)= m-d(v_1,u)=m-min(v_1,S)=m-d_{min}$. Let $v$ and $v'$ be such that $max(v_1,S)=d(v_1,v)$ and $max(v_2,S)=d(v_2,v')$. If $v=v'$ then $v=v_{m+2}$. In this case $min(v_{m+2},S)=0$. Therefore $f(v_{m+2},S)=m-d_{min}=max(v_1,S)-min(v_1,S)=f(v_1,S)$. If $v \neq v'$, $d(v_2,v)=d(v_1,v')=d_{max}-1$. Hence $min(v_{m+2},S)=d(v_{m+2},v)=d(v_{m+2},v')=m-(d_{max}-1)=m-d_{max}+1$. Therefore $f(v_{m+2},S)=m-d_{min}-(m-d_{max}+1)=d_{max}-d_{min}-1 < f(v_1,S)=f(v_2,S)$. This contradicts the fact that $v_1,v_2 \in FC(S)$ and so we rule out this possibility. Hence in all the possible cases $v_1,v_2 \in A \Rightarrow v_{m+2} \in A$.\vspace{1.5mm}\\
Conversely, assume that $A \subseteq V$ is such that for every pair of consecutive vertices $v_i ,v_{i+1}$ in $A$, $v_{m+i+1}$ belong to $A$. Let $v_1,\dots v_k$, $k>1$,  be consecutive vertices belonging to $A$. Then $v_{m+2},v_{m+3},\dots,v_{m+k}$ belong to $A$. With out loss of generality we may assume that
\vspace{-2mm}
\begin{enumerate}[$i$)]
\itemsep -14pt
\item  $A$ does not contain any consecutive set of vertices other than the above two.\\
\item $v_{m+1}$ does not belong to $A$.
\end{enumerate}
\vspace{-1mm}
Now, construct the set $S$ as follows
\vspace{-1mm}
\begin{enumerate}[$step$ $I)$]
\itemsep -12pt
\item If $k=3r$ or $3r+1$ for some integer $r$ then add vertices $v_2,v_5,\ldots,v_{3r-1}$ of $A$ to $S$. If $k=3r+2$ then add to $S$ the vertices $v_3,v_6, \ldots,v_{3r}$ from $A$.\\
\item  Add to $S$ the vertices $v_i, m+2 \le i \le m+k$ of $A$, which are not an eccentric vertex of any of the vertices in $S$.\\
\item  Add $A^c$ to $S$.
\end{enumerate}
Let $x \in V \setminus A$.  Then $x \in S$ and therefore $min(x,S)=0$. Let $y$ and $z$ be the eccentric vertices of $x$ in $G$.  Take note that $yz$ is an edge.  If $\{y,z\}\subseteq S^c$, then we have $\{y,z\}\subseteq A$.  Since $x$ is an eccentric vertex of $y$ and $z$, we have $x\in A$, which is not true.  Hence either $y$ or $z$ belongs to $S$, and we have $max(x,S)=m$, so that $f(x,S)=m$.

Let $x \in A$ be such that both the neighbours of $x$ do not belong to $A$. Then $x \notin S$ and neighbours of $x$ belong to $S$ and $min(x,S)=1$. Let $x_1$ and $x_2$ be the eccentric vertices of $x$. Then $x_1,x_2 \notin S$ implies either $x_1,x_2 \in \{v_1,\ldots,v_k\}$ or $x_1,x_2 \in \{v_{m+2},\ldots,v_{m+k}\}$. In the former case $x \in \{v_{m+1}, \ldots, v_{m+k}\}$ and in the latter case $x \in \{ v_1,\ldots,v_k\}$ and this is not possible by the choice of $x$. Hence either $x_1$ or $x_2$ belong to $S$. Hence $max(x,S)=m$. Therefore $f(x,S)=m-1$. By the way of choice of vertices $v_i,1 \le i \le k,$ in $S$ either $min(v_i,S)=1$ and $max(v_i,S)=m$ or $min(v_i,S)=0$ and $max(v_i,S)=m-1$. Hence $f(v_i,S)=m-1$ for $1 \le i \le k$. For $m+2 \le i \le m+k$, $v_i \notin S$ implies eccentric of $v_i$ belong to $S$. Therefore in this case $min(v_i,S)=1$ and $max(v_i,S)=m$ or  $f(v_i,S)=m-1$. Now, for $m+2 \le i \le m+k$, $v_i \in S$ implies $min(v_i,S)=0$. Now an eccentric vertex of $v_i$, $m+2 \le i \le m+k$, belong to $S$ implies $v_i \notin S$. Hence for $v_i \in S$ eccentric vertices of $v_i \notin S$. Also there are no three consecutive vertices among $v_i$'s, $1 \le i \le k$, absent from $S$. Hence $max(v_i,S)=m-1$ for $m+2 \le i \le m+k$. Also the two eccentric vertices of $v_{m+k+1}$, $v_k$ and $v_{k+1}$ does not belong to $S$. Hence if $v_{m+k+1} \in S$, $f(v_{m+k+1},S)=m-1$. Therefore for each $v_i \in A$, $f(v_i,S)=m-1$ and for each $v_i \notin A$, $ f(v_i,S)=m$. Hence $FC(S)=A$ or $A$ is a fair set.
\end{proof}
We have an immediate corollary for the theorem~\ref{oddcycle} and the proof follows from the proof of the theorem~\ref{oddcycle}.
\begin{Corollary} 
If $G$ is an odd cycle, and $U\subset V(G)$ contains no adjacent vertices then $U$ will be a fair set of $G$.
\end{Corollary}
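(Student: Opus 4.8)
The plan is to read this off directly from the characterization in Theorem~\ref{oddcycle}. That theorem asserts that a set $A\subseteq V(C_{2m+1})$ is a fair set precisely when it satisfies the following closure condition: whenever a pair of consecutive vertices $v_i,v_{i+1}$ both lie in $A$, the vertex $v_{m+i+1}$ that is eccentric to both of them also lies in $A$. My first step is simply to observe that the hypothesis of the corollary, namely that $U$ contains no two adjacent vertices, means that $U$ contains no pair of consecutive vertices of the cycle at all. Consequently the antecedent ``$v_i,v_{i+1}\in U$'' is false for every $i$, the implication is vacuously true, and Theorem~\ref{oddcycle} immediately yields that $U$ is a fair set.

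If one prefers an explicit argument rather than an appeal to vacuous satisfaction, the second step is to exhibit a set $S$ with $FC(S)=U$ by specializing the construction used in the converse direction of Theorem~\ref{oddcycle}. There, each maximal run of consecutive vertices of $A$ had some length $k$; in the present situation every such run is a single vertex, so $k=1$ throughout. Tracing Steps~$I$--$III$ of that construction shows that the first two steps contribute no vertices when $k=1$, so $S$ reduces to $A^{c}=V\setminus U$. One then checks directly that $f(v,S)=m-1$ for $v\in U$ and $f(v,S)=m$ for $v\notin U$, whence $FC(S)=U$. The point that makes the larger value never arise on $U$ is that the two eccentric vertices of any vertex of the odd cycle form an edge, and $U$, being independent, cannot contain both endpoints of that edge; thus every vertex has an eccentric vertex lying in $S$, and its maximum distance to $S$ equals $m$.

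The content here is light: the only point that deserves care is confirming that the ``if'' direction of Theorem~\ref{oddcycle} genuinely covers this boundary case, since the converse proof was written with runs of length $k>1$ in mind. I expect this to be the sole, and minor, obstacle. One either checks that the construction of $S$ does not implicitly require a run of length at least two (which the computation above confirms, since $S=V\setminus U$ does the job), or one simply invokes the biconditional statement of the theorem and relies on vacuous truth, in which case nothing further is needed.
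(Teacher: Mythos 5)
Your proposal is correct and takes essentially the same route as the paper, which proves the corollary with the single remark that it follows from the proof of Theorem~\ref{oddcycle} (for an independent $U$ the consecutive-pair condition holds vacuously, so the ``if'' direction applies). Your additional check is well placed: since the converse proof of Theorem~\ref{oddcycle} was written assuming a run of $k>1$ consecutive vertices of $A$, your explicit verification that the construction degenerates to $S=V\setminus U$, with $f(v,S)=m-1$ on $U$ and $f(v,S)=m$ off $U$ because the two eccentric vertices of any vertex form an edge not contained in $U$, is exactly the small patch needed to make the citation airtight.
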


\begin{Corollary}
The only connected fair sets of an odd cycle $C_{2m+1}$  are singleton (vertex) sets and the whole vertex set $V$.
\end{Corollary}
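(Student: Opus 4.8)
The plan is to combine Theorem~\ref{oddcycle} with the elementary fact that a nonempty subset $A\subseteq V(C_{2m+1})$ induces a connected subgraph precisely when $A$ is a single vertex, a set of consecutive vertices (an arc), or the whole vertex set $V$. So I would first dispose of the two families claimed to be fair. A singleton contains no adjacent vertices, hence is a fair set by the preceding corollary, and it is trivially connected. The full set $V$ satisfies the hypothesis of Theorem~\ref{oddcycle} trivially (the required eccentric vertex of any consecutive pair automatically lies in $V$), so $V$ is a fair set, and it too is connected. This establishes that the listed sets really are connected fair sets.

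The substance is then to show that no other connected set is fair. Let $A$ be a connected fair set that is neither a singleton nor all of $V$; then $A$ is an arc, and after relabelling I may write $A=\{v_1,\dots,v_k\}$ with $2\le k\le 2m$. The strategy is to produce a pair of consecutive vertices inside $A$ whose common eccentric vertex falls outside $A$, which by Theorem~\ref{oddcycle} contradicts $A$ being fair. Here I use that the vertex eccentric to the consecutive pair $v_i,v_{i+1}$ is $v_{m+i+1}$ (indices read modulo $2m+1$).

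I would split into two ranges chosen so that the witnessing eccentric vertex is visibly outside $A$. If $2\le k\le m$, take the pair $v_1,v_2\in A$; its common eccentric vertex $v_{m+2}$ has index $m+2>m\ge k$, so $v_{m+2}\notin A$. If $m+1\le k\le 2m$, then $v_m,v_{m+1}$ both lie in $A$, and their common eccentric vertex is $v_{2m+1}$, whose index $2m+1>2m\ge k$ puts it outside $A$. In either range a consecutive pair of $A$ has its eccentric vertex missing from $A$, so $A$ is not fair by Theorem~\ref{oddcycle}, a contradiction. Hence the only connected fair sets are the singletons and $V$.

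The only real obstacle I anticipate is bookkeeping with the cyclic indices: one must ensure the chosen pair genuinely lies in the arc $A$ and that the eccentric index, reduced modulo $2m+1$, lands in the complementary arc rather than wrapping back into $\{v_1,\dots,v_k\}$. The two-range split is designed exactly to avoid any wraparound subtlety, since in each case the witnessing index ($m+2$ in the short-arc case, $2m+1$ in the long-arc case) exceeds $k$ outright and so cannot coincide with any vertex of $A$.
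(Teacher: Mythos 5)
Your proposal is correct and rests on essentially the same idea as the paper's proof: both apply the forward direction of Theorem~\ref{oddcycle} (a consecutive pair in a fair set forces its common eccentric vertex into the set) together with the fact that connected subsets of a cycle are arcs. The only difference is organizational --- the paper iterates the pair--eccentric-vertex condition along a path inside $A$ to force $A=V$, while you argue contrapositively via a single witness pair ($v_1,v_2$ when $k\le m$, $v_m,v_{m+1}$ when $k\ge m+1$), and your index bookkeeping is correct in both ranges.
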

\begin{proof}
By the Theorem~\ref{oddcycle}, $\{v_i\}, 1 \le i \le 2m+1$ and $V$ are fair sets. Now let $A \subseteq V$ be a connected fair set of $C_{2m+1}$ which contains more than one element. Let $v_i, v_j \in A$. Then there exists a path connecting $v_i$ and $v_j$ in $A$ with out loss of generality we may assume that it is $v_i,v_{i+1},\ldots v_j$. $v_i,v_{i+1} \in A$ implies $v_{m+i+1} \in A$. Therefore a path connecting $v_i$ and $v_{m+i+1}$ lies in $A$. Since this path contains $m+2$ consecutive vertices by the theorem we can conclude that $A$ should also contain the other $m-1$ vertices or $A=V$.
\end{proof}
 \subsection{Fair sets of Symmetric Even graphs}
A graph G is called $even$ if for every $u \in V(G)$ there exists a $v \in V(G)$ such that $d(u,v)=diam(G)$. An even graph $G$ is $symmetric$ if for every $u \in  V(G)$ there exists a $v \in V(G)$ such that $I(u,v)=G$, see Gobel et.al \cite{key18}. Hypercubes, even cycles etc are well known examples of symmetric even graphs.  From the definition of symmetric even graphs it is clear that they are unique eccentric vertex graphs and every vertex of such a graph is an eccentric vertex. 
For more examples of symmetric even graphs and one way of constructing such graphs,  see, Gobel et.al \cite{key18}.  We first write the following proposition.
\begin{Proposition}[Proposition 4 of \cite{key18}]
Let $u$ and $v$ be vertices of a symmetric even graph $G$ of diameter $d$. If $v \in N_{i}(u)$ and $\bar{v} \in N_{j}(u)$, then $i+j=d$.
\label{prop1}
\end{Proposition}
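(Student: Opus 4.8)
The plan is to reduce the claim to the additivity of distances along a geodesic through $u$, by showing that $u$ must lie on a shortest $v$-$\bar v$ path. First I would record two facts that follow directly from the definitions. Since $G$ is even, every vertex has eccentricity equal to the diameter $\mathrm{diam}(G)=d$; in particular the unique eccentric vertex $\bar v$ of $v$ satisfies $d(v,\bar v)=d$. Moreover, by the triangle inequality $d=d(v,\bar v)\le d(v,u)+d(u,\bar v)=i+j$, which already yields the easy inequality $i+j\ge d$.

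The crucial step is to establish that $I(v,\bar v)=G$ for every vertex $v$. Because $G$ is symmetric, there is some vertex $w$ with $I(v,w)=G$. Every vertex of $G$, and in particular an eccentric vertex of $v$, then lies on a shortest $v$-$w$ path, so $d(v,w)\ge e(v)=d$; as also $d(v,w)\le d$ we get $d(v,w)=d$, i.e. $w$ is an eccentric vertex of $v$. Since a symmetric even graph is a unique eccentric vertex graph, $w=\bar v$, and therefore $I(v,\bar v)=G$.

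With this in hand the proof finishes quickly: the hypotheses place $u$ in $V(G)=I(v,\bar v)$, so $u$ lies on some shortest $v$-$\bar v$ path and the distances add, giving $d(v,u)+d(u,\bar v)=d(v,\bar v)=d$, that is $i+j=d$. I expect the only real obstacle to be the middle step, namely pinning down that the vertex $w$ supplied by symmetry must be exactly $\bar v$, since this is where the three hypotheses (\emph{even}, \emph{symmetric}, and \emph{unique eccentric vertex}) have to be combined; once $I(v,\bar v)=G$ is known, the conclusion is immediate from interval additivity.
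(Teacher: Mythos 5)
Your proof is correct, but note that the paper itself gives no argument for this statement: it is quoted verbatim as Proposition~4 of G\"obel and Veldman \cite{key18}, so there is no in-paper proof to compare against. What you have written is a complete, self-contained derivation of the cited fact, and it follows what is essentially the standard route: first upgrade the symmetry hypothesis to the statement that $I(v,\bar v)=G$ for every vertex $v$, then read off $i+j=d$ from additivity of distances along a geodesic through $u$. Each step checks out: evenness gives $e(v)=d$ for all $v$; the vertex $w$ with $I(v,w)=G$ must satisfy $d(v,w)\ge e(v)=d$ (any eccentric vertex of $v$ lies on a $v$--$w$ geodesic), hence $d(v,w)=d$ and $w$ is eccentric to $v$. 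One small remark: at the point where you invoke the unique eccentric vertex property to conclude $w=\bar v$, you are relying on a fact the paper asserts without proof (``it is clear that they are unique eccentric vertex graphs''); but your own computation already yields it, since for any eccentric vertex $z$ of $v$ the geodesic relation $d(v,z)+d(z,w)=d(v,w)=d$ forces $d(z,w)=0$, i.e.\ $z=w$. So there is no circularity, and if you make that half-line explicit your argument proves the uniqueness claim en route rather than assuming it, which makes the write-up strictly stronger than what the paper records.
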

Next we identify the fair sets of symmetric even graphs.
\begin{Theorem}
Let G be a symmetric even graph. An $A \subseteq V$ is a fair set if and only if a vertex $x \in A \Rightarrow \bar{x} \in A$.
\end{Theorem}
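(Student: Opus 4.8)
The plan is to build both directions on a single distance-reversal identity. Writing $d=\mathrm{diam}(G)$, the first step is to record as a lemma that for every vertex $x$ and every $S\subseteq V$ one has $f(\bar x,S)=f(x,S)$. To prove it I would apply Proposition~\ref{prop1} with base vertex $s$, for each $s\in S$: since $x\in N_i(s)$ and $\bar x\in N_j(s)$ force $i+j=d$, we get $d(\bar x,s)=d-d(x,s)$ for every $s\in S$. Taking the maximum and the minimum over $s\in S$ then yields $\max(\bar x,S)=d-\min(x,S)$ and $\min(\bar x,S)=d-\max(x,S)$, so that $f(\bar x,S)=\max(x,S)-\min(x,S)=f(x,S)$. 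This identity is the engine for everything that follows.

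The forward implication is then immediate: if $A=FC(S)$ and $x\in A$, then $f(x,S)$ is the global minimum of $f(\cdot,S)$, and by the identity $f(\bar x,S)=f(x,S)$ is equally minimal, so $\bar x\in FC(S)=A$. No case analysis is needed. It is worth noting that $x\mapsto\bar x$ is a fixed-point-free involution (since $d(x,\bar x)=d>0$), so $V$ splits into antipodal pairs and ``closed under bar'' simply means ``a union of such pairs''; in particular $A^{c}$ is again a union of pairs, so $|A^{c}|\neq 1$.

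The converse --- producing, for a given union of pairs $A$, a set $S$ with $FC(S)=A$ --- is where the work lies. I would dispose of $A=V$ at once by taking $S=V$: since $G$ is even, $e(v)=d$ for all $v$, so $f(v,V)=d$ uniformly and $FC(V)=V$. For a proper $A$ I would start from $S_{0}=A^{c}$ (admissible, as $|A^{c}|\ge 2$). Using the identity, the complement is forced to maximal partiality while $A$ stays strictly below it: for $v\in A^{c}$ we have $v,\bar v\in A^{c}$, so $\min(v,S_{0})=0$ and $\max(v,S_{0})=d$, giving $f(v,S_{0})=d$; for $v\in A$ we have $\bar v\notin A^{c}$, so $\max(v,S_{0})\le d-1$ and $\min(v,S_{0})\ge 1$, giving $f(v,S_{0})\le d-2$. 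Hence $FC(A^{c})\subseteq A$. The plan is then to enlarge $S_{0}$ to $S=A^{c}\cup T$ with $T\subseteq A$ containing no full antipodal pair; one checks that this keeps every vertex of $A^{c}$ at partiality exactly $d$ and every vertex of $A$ strictly below $d$, so $FC(S)\subseteq A$ persists. Moreover enlarging $S$ only raises partialities (maxima do not decrease, minima do not increase), so the extra vertices of $T$ can only lift the under-served pairs of $A$ toward a common ceiling.

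The hard part --- and the step I expect to carry the weight of the proof --- is showing that $T$ can be chosen so that $f(\cdot,S)$ is \emph{constant} on all of $A$, i.e.\ that the minimum is attained by every vertex of $A$ and not merely a proper subset. This equalization is genuinely needed: already in even cycles $FC(A^{c})$ can be a proper subset of $A$, and the correct $T$ depends on the structure of $A$. I would attack it by inserting, for each antipodal pair of $A$ whose partiality under $S_{0}$ is below the target value, a single well-placed representative into $T$, and arguing --- via Proposition~\ref{prop1} together with the defining property $I(u,\bar u)=G$ of symmetric even graphs --- that each insertion raises the partiality of its own pair exactly to the common target without pushing any other pair of $A$ up to $d$. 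Making this simultaneous adjustment precise, and proving it terminates with a single uniform value on $A$ while leaving every vertex of $A^{c}$ at partiality $d$, is the crux of the converse.
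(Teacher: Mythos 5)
Your distance-reversal lemma ($\max(\bar{x},S)=d-\min(x,S)$, $\min(\bar{x},S)=d-\max(x,S)$, hence $f(\bar{x},S)=f(x,S)$) and the resulting forward direction are exactly the paper's argument, and your observation that closure under the antipodal involution forces $|A^{c}|\neq 1$ is sound. The genuine gap is in the converse: you yourself flag the equalization step --- choosing $T$ so that $f(\cdot,S)$ is constant on all of $A$ --- as the unresolved ``crux,'' and your proposed mechanism (insert one representative only for each \emph{under-served} pair, and argue each insertion lifts its own pair to the common target without disturbing the others) is neither carried out nor obviously workable, since adding a vertex to $S$ can lower minima and raise maxima for \emph{every} pair simultaneously. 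As it stands the proposal proves $FC(A^{c}\cup T)\subseteq A$ but not the required equality, so the theorem is not established.

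The paper's resolution is simpler and makes equalization automatic: writing $A=\{x_1,\dots,x_m,\bar{x}_1,\dots,\bar{x}_m\}$, it takes $T$ to be a \emph{full} transversal, i.e.\ $S=A^{c}\cup\{\bar{x}_1,\dots,\bar{x}_m\}$. Provided $S$ is dominating (every vertex lies in $S$ or has a neighbour in $S$), each $x_i$ gets $\max(x_i,S)=d$ (attained at $\bar{x}_i\in S$) and $\min(x_i,S)=1$, each $\bar{x}_i$ gets $\min(\bar{x}_i,S)=0$ and $\max(\bar{x}_i,S)=d-1$ (the only vertex at distance $d$ from $\bar{x}_i$ is $x_i\notin S$, while a neighbour of $x_i$ in $S$ lies at distance $d-1$), so $f\equiv d-1$ on $A$, while $y\in A^{c}$ has $y,\bar{y}\in S$ and $f(y,S)=d$; no per-pair tuning is needed. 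Indeed, under your own constraint that $T$ contain no full antipodal pair, one always has $f(v,S)\le d-1$ for $v\in A$, so $d-1$ is the only possible common value --- which points directly to the full-transversal choice. The one residual difficulty, which your sketch also does not address, is the domination hypothesis: if some $x_i$ and all its neighbours avoid $S$, then $\min(x_i,S)\ge 2$ and $f(x_i,S)\le d-2$, breaking constancy; the paper repairs this by swapping $x_i$ for $\bar{x}_i$ in $S$ and iterating (its termination argument is admittedly terse, but it at least supplies the mechanism). To complete your proof you would need both the full-transversal computation above and an honest treatment of this swapping step.
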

\begin{proof} Let $diam(G)=d$. Assume $A \subseteq V$ is a fair set with $FC(S)=A$ where $S=\{v_1, \ldots ,v_k\}$ and and let $x \in A$. Let $d(x,v_i)=d_i, 1 \leq i \leq k$. With out loss of generality we may assume that $d_1 \leq d_2 \leq \ldots \leq d_k$. Then $f(x,S)=d_k-d_1$. By proposition \ref{prop1}, $d(x,v_i)= d_i \Rightarrow d(\bar{x},v_i)=d-d_i$. Therefore $f(\bar{x},S)=d-d_1-(d-d_k)= d_k-d_1$.  Hence $x \in A \Rightarrow \bar{x} \in A$.\\
Conversely, assume that $A \subseteq V$ is such that $x \in A \Rightarrow \bar{x} \in A$. To prove $A=FC(S)$ for some $S \subseteq V$. Let $A =\{x_1, \ldots, x_m,\bar{x_1},\ldots, \bar{x_m}\}$. Let $S_1=S \setminus \{x_1,\ldots,x_m\}$. Suppose for every $y \in v$ either $y \in S_1$ or some neighbour of $y$ is in $S_1$. Then for each $x_i, i \leq i \leq m, f(x_i,S)=d(x_i,\bar{x_i})-1 =d-1$(Here the minimum distance is $1$ since $x_i \notin S$ and some neighbour of $x_i$ is in $ S$). For each $\bar{x_i} ,1 \leq i \leq m$, $f(\bar{x_i},S)=d-1-0= d-1$(Here the maximum distance is $d-1$ since $x_i \notin S$ and some neighbour of $x_i$ is in $S$. The minimum distance is 0 since $\bar{x_i} \in S$). Now, for a $y$ different from $x_i,\bar{x_i}$, $1 \leq i \leq m$, $min(y,S)=0$ since $y \in S$ and $max(y,S)=d$ since $\bar{y} \in S$. Therefore $f(y,S)=d-0=d$. In other words $FC(S)=A$. Now, assume that there exists  vertices  in $V$ such that neither those vertices nor any of their adjacent vertices are in $S$. Assume that $x_1$ is one such vertex. Then $min(x_1,S) >1$ and $max(x_1,S)=d$. Therefore $f(x_1,S)=max(x_1,S)-min(x_1,S)\leq d-2$. Let $x_k$ be vertex such that a neighbour of $x_k$ is in $S$. Then $min(x_k,S)=1$ and $max(x_k,S)=d$ and therefore $f(x_k,S)=d-1$. Therefore $FC(S) \neq A$. Let $S_1= \{x_1\} \cup S \setminus \{\bar{x_1}\}$. Now, $min(x_1,S_1)=0$ and $max(x_1,S_1)= d-1$. If for every $v \in V$ either $v \in S_1$ or some neighbour of $v$ is in $S_1$ then $FC(S_1)=A$. Otherwise continue the above process until we get an $S_i$ such that $FC(S_i)=A$ and that proves the theorem.
\end{proof}
\begin {Corollary}
The only connected fair set of an even cycle $C_{2m}$ is the whole vertex set $V$.
\end{Corollary}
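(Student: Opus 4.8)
The plan is to lean entirely on the preceding characterization of fair sets of symmetric even graphs. First I would record the relevant facts about $C_{2m}$: it is a symmetric even graph (as noted in the excerpt) with $\mathrm{diam}(C_{2m})=m$, and if we label the vertices $v_1,\dots,v_{2m}$ cyclically then the unique eccentric vertex of $v_i$ is the antipodal vertex $\bar{v_i}=v_{i+m}$, with indices read modulo $2m$. By the Theorem on symmetric even graphs, a set $A\subseteq V$ is a fair set of $C_{2m}$ if and only if $x\in A\Rightarrow\bar{x}\in A$; equivalently, $A$ is invariant under the antipodal map $\sigma\colon v_i\mapsto v_{i+m}$, which is the rotation by $m$ and is an involution of the cycle.

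Next I would translate the word \emph{connected}. The induced subgraph of a subset of a cycle is connected precisely when $A$ is either the whole vertex set $V$ or a proper arc, i.e. a set $\{v_i,v_{i+1},\dots,v_j\}$ of consecutive vertices with fewer than $2m$ elements. Since $V$ is trivially $\sigma$-invariant, the Theorem already guarantees that $V$ is a (connected) fair set, so it remains only to rule out proper arcs.

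The heart of the argument is to show that no proper arc can be $\sigma$-invariant. Suppose $A$ is a proper arc consisting of $k$ consecutive vertices, $1\le k\le 2m-1$. Then $\sigma(A)$ is again an arc with exactly $k$ vertices, and if $A$ is a fair set then $\sigma(A)\subseteq A$; since $|\sigma(A)|=|A|=k$, this forces $\sigma(A)=A$. But a proper arc on a $2m$-cycle is determined by its left endpoint (the unique vertex of the arc whose cyclic predecessor lies outside $A$), so $\sigma(A)=A$ would require $v_{i+m}=v_i$, i.e. $m\equiv 0 \pmod{2m}$, which is impossible for $m\ge 1$. The degenerate case $k=1$ is the same statement: a singleton $\{v_i\}$ cannot contain $\bar{v_i}\neq v_i$. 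This contradiction shows that every proper arc fails the closure condition and hence is not a fair set.

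I expect the main obstacle to be the bookkeeping in this last step: one must argue cleanly that a connected fair set is forced to be $\sigma$-invariant \emph{as a set} (using only $\sigma(A)\subseteq A$ together with $|\sigma(A)|=|A|$) and that $\sigma$-invariance of a proper arc is impossible, being careful with the wrap-around indexing and the degenerate singleton case. Combining the three steps — characterization, reduction of ``connected'' to arcs, and impossibility of an invariant proper arc — yields that the only connected fair set of $C_{2m}$ is $V$.
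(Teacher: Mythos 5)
Your proof is correct, but it runs in the opposite direction from the paper's. The paper argues expansively: it picks $u$ in a connected fair set $A$, gets $\bar{u}\in A$ from the symmetric-even-graph theorem, uses connectivity to force one of the two $u$--$\bar{u}$ arcs (an $m+1$-vertex path) into $A$, and then applies the theorem again to every vertex of that arc so that the antipodes fill in the complementary arc, giving $A=V$ directly. You argue by exclusion: you classify the connected vertex subsets of a cycle as $V$ or proper arcs, upgrade the closure condition $\sigma(A)\subseteq A$ to genuine invariance $\sigma(A)=A$ via the cardinality of the bijective image, and then observe that rotation by $m$ fixes no proper arc, since it would have to fix the arc's unique left endpoint, forcing $v_{i+m}=v_i$. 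Both routes use the antipodal-closure theorem as the sole input. Your version buys two small things the paper's does not: a clean, purely symmetry-theoretic contradiction (no tracking of which vertices accumulate in $A$), and an explicit verification that $V$ itself is a fair set (being $\sigma$-invariant), which the word ``only'' in the statement tacitly requires but the paper's proof never checks. The paper's version buys brevity and avoids the (easy but unstated in the paper) classification of connected subsets of a cycle as arcs. Both are complete; yours is slightly more careful at the edges, including the degenerate singleton case, which the paper handles only implicitly.
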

\begin{proof}
Let A be a connected fair set of $C_{2m}$. Let $u \in A$. Then by the above theorem $\bar{u} \in A$. Since $A$ is connected atleast one of the paths connecting $u$ and $\bar{u}$ should be in $A$. Again by the theorem the eccentric vertices of the vertices of this path should also be in $A$. Hence $A=V$.
\end{proof}
\section{Fair sets and cartesian product of graphs}\label{cp}
Next we have an expression for the fair center of product sets in the cartesian product of two graphs
\begin{Theorem}
Let $G_1=(V_1,E_1)$ and $G_2=(V_2,E_2)$ be two graphs. Let $S_1\subseteq V_1$ and $S_2\subseteq V_2$. Then $FC(S_1 \times S_2)= FC(S_1) \times FC(S_2)$ where
$FC(S_1 \times S_2)$ is the fair center of $S_1 \times S_2$ in the graph $G_1 \Box G_2$, $FC(S_1)$ is the fair center of $S_1$ in the graph $G_1$ and
$FC(S_2)$ is the fair center of $ S_2$ in the graph $ G_2$.
\end{Theorem}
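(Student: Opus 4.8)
The plan is to reduce everything to the standard distance formula for the Cartesian product, namely that $d_{G_1 \Box G_2}\big((u_1,u_2),(v_1,v_2)\big) = d_{G_1}(u_1,v_1) + d_{G_2}(u_2,v_2)$ (see \cite{key20}). The preliminaries only define the adjacency relation of $G_1 \Box G_2$, so I would first state this distance formula explicitly. Then I would fix an arbitrary vertex $(x_1,x_2)$ of $G_1 \Box G_2$ and use the formula to compute its minimum and maximum distances to the product set $S_1 \times S_2$. A vertex of $S_1 \times S_2$ is any pair $(s_1,s_2)$ with $s_1 \in S_1$ and $s_2 \in S_2$, and the two coordinate distances can be chosen independently, so the extremum of the sum $d_{G_1}(x_1,s_1)+d_{G_2}(x_2,s_2)$ separates coordinatewise:
$$min\big((x_1,x_2), S_1 \times S_2\big) = min(x_1,S_1) + min(x_2,S_2),$$
$$max\big((x_1,x_2), S_1 \times S_2\big) = max(x_1,S_1) + max(x_2,S_2).$$

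The crucial step is then the resulting additivity of partiality, obtained by subtracting the two displayed expressions and regrouping the four terms:
$$f_{G_1 \Box G_2}\big((x_1,x_2), S_1 \times S_2\big) = f_{G_1}(x_1,S_1) + f_{G_2}(x_2,S_2).$$
Finally I would argue that minimizing a sum of two quantities depending on disjoint coordinates is the same as minimizing each summand separately. Writing $p_1 = \min_{x_1} f_{G_1}(x_1,S_1)$ and $p_2 = \min_{x_2} f_{G_2}(x_2,S_2)$ for the minimum partialities attained on $FC(S_1)$ and $FC(S_2)$, the global minimum of $f_{G_1 \Box G_2}(\cdot, S_1 \times S_2)$ equals $p_1 + p_2$, and it is attained at $(x_1,x_2)$ precisely when $f_{G_1}(x_1,S_1)=p_1$ and $f_{G_2}(x_2,S_2)=p_2$, that is, exactly when $x_1 \in FC(S_1)$ and $x_2 \in FC(S_2)$. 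This yields $FC(S_1 \times S_2) = FC(S_1) \times FC(S_2)$.

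There is no serious obstacle here; the only point requiring care is justifying that the minimum and maximum over $S_1 \times S_2$ split as sums over the factors. This separation rests essentially on the \emph{product} structure of the target set: if one replaced $S_1 \times S_2$ by an arbitrary subset of $V_1 \times V_2$, the coordinatewise optimization would no longer decouple and the theorem would fail, so I would emphasize that the hypothesis is that $S$ is itself a product set.
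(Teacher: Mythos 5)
Your proposal is correct and follows essentially the same route as the paper: both rest on the distance formula $d_{G_1 \Box G_2}((x,y),(u,v))=d_{G_1}(x,u)+d_{G_2}(y,v)$, deduce that the minimum and maximum over the product set $S_1\times S_2$ split coordinatewise, obtain the additivity $f_{G_1 \Box G_2}((x,y),S_1\times S_2)=f_{G_1}(x,S_1)+f_{G_2}(y,S_2)$, and then observe that minimizing this sum decouples (the paper carries out your final decoupling step as two explicit inclusions, fixing one coordinate in the converse direction, but the argument is the same). Your explicit statement of the distance formula and your closing remark that the decoupling genuinely requires $S$ to be a product set are both sound points that the paper leaves implicit.
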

\begin{proof}
Let $(x,y)\in V_1 \times V_2$, $S_1=\{u_{11},u_{12},\dots,u_{1l}\}$ and $S_2=\{u_{21},\dots,u_{2m}\}$ where $u_{11}$ is the vertex nearest to $x$ and
$u_{1l} $ is the vertex farthest from $x$ and $u_{21}$ is the vertex nearest to $y$ and $u_{2m}$ is the vertex farthest from $y$. For $(u_{1i},u_{2j}) \in
S_1\times S_2$
\begin{eqnarray*}
 d((x,y),(u_{11},u_{21}))  & = &  d(x,u_{11})+d(y,u_{21})\\
                                     & \le & d(x,u_{1i})+d(y,u_{2j})(=d((x,y),(u_{1i},u_{2j})))\\
                                     & \le & d(x,u_{1l})+d(y,u_{2m})\\
                                     & = & d((x,y),(u_{1l},u_{2m}))
\end{eqnarray*}
That is, if $u_{11}$ is the vertex nearest to $x$ in $S_1 \subseteq V_1$ , $u_{21}$ is the vertex nearest to $y$ in $ S_2 \subseteq V_2,$ $u_{1l}$ is the vertex
farthest from $x$ in $S_1 \subseteq V_1$ and $u_{2m}$ is the vertex farthest from $y$ in $S_2 \subseteq V_2$ then $(u_{11},u_{21})$ is the vertex nearest to
$(x,y)$ in $S_1 \times S_2$ and $(u_{1,l},u_{2,m})$ is the vertex farthest from $(x,y)$ in $S_1 \times S_2$
\begin{eqnarray*}
f_{G_1 \Box G_2}((x,y),S_1 \times S_2) &=& d((x,y),(u_{1l},u_{2m}))-d((x,y),(u_{11},u_{21}))\\
                                                   & = & d(x,u_{1l})+d(y,u_{2m})-d(x,u_{11})-d(y,u_{21})\\
                                                      & = &d(x,u_{1l})-d(x,u_{11}+d(y,u_{2m}-d(y,u_{21})\\
                                                              & = &f_{G_1}(x,S_1)+f_{G_2}(x,S_2)
\end{eqnarray*}
Now, let $u_1 \in FC(S_1)$ where $S_1 \subseteq V_1$ and let $u_2 \in FC(S_2)$ where $S_2 \subseteq V_2$.  That is
$f_{G_1}(u_1,S_1) \le f_{G_1}(x,S_1)$, $\forall x \in V_1$ and $f_{G_2}(u_2,S_2) \le f_{G_2}(y,S_2)$, $\forall y \in V_2$. Therefore $ f_{G_1}(u_1,S_1)+f_{G_2}(u_2,S_2) \le f_{G_2}(x,S_1)+f_{G_2}(y,S_2)$.  So, $f_{G_1 \Box G_2}((u_1,u_2),S_1 \times S_2) \le f_{G_1 \Box G_2}((x,y),S_1 \times S_2)$, $ \forall (x,y) \in V_1 \times V_2$.  
Hence $(u_1,u_2) \in FC(S_1 \times S_2)$  in $G_1 \Box G_2$.\\
Conversely, assume that $(u_1,u_2)\in FC(S_1 \times S_2)$ in $G_1 \Box G_2$ where $S_1 \subseteq V_1$ and $S_2 \subseteq V_2$.  That is, $f_{G_1 \Box G_2}((u_1,u_2),S_1 \times S_2) \le f_{G_1 \Box G_2}((x,y),S_1 \times S_2)$, where $S_1 \subseteq V_1$ and $S_2 \subseteq V_2$, $\forall x \in V_1$, $y \in V_2$.  Therefore, $ f_{G_1}(u_1,S_1)+f_{G_2}(u_2,S_2) \le f_{G_1}(x,S_1)+f_{G_2}(y,S_2)$, $ \forall x \in V_1$ and $y \in V_2 $ or $  f_{G_1}(u_1,S_1)+f{G_2}(u_2,S_2) \le f_{G_1}(x,S_1)+f_{G_2}(u_2,S_2)$, $ \forall x \in V_1$.  That is, $ f_{G_1}(u_1,S_1) \le f_{G_1}(x,S_1)$, $ \forall x \in V_1$.  Hence, $u_1\in FC(S_1)$ in $G_1$. Similarly $u_2 \in FC(S_2)$ in $G_2$. Thus, $FC(S_1 \times S_2)= FC(S_1) \times FC(S_2)$.
\end{proof}
\begin{Corollary}
Let $G_1=(V_1,E_1)$ and $G_2=(V_2,E_2)$ be two graphs. Then the  subgraph induced by  $FC(S_1 \times S_2)$, where $S_1 \subseteq V_1$ and $S_2 \subseteq V_2$
is connected in $G_1 \Box G_2$ if and only if the subgraph induced by $FC(S_1)$ is connected in $G_1$ and the subgraph induced by $FC(S_2)$is connected
in $G_2$.
\end{Corollary}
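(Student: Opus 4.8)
The plan is to leverage the preceding theorem, which identifies $FC(S_1 \times S_2)$ with the product $FC(S_1) \times FC(S_2)$ in $G_1 \Box G_2$. Writing $A_1 = FC(S_1) \subseteq V_1$ and $A_2 = FC(S_2) \subseteq V_2$, the statement then reduces to a purely structural fact about Cartesian products: the subgraph of $G_1 \Box G_2$ induced by $A_1 \times A_2$ is connected if and only if $G_1[A_1]$ and $G_2[A_2]$ are both connected. Note that $A_1$ and $A_2$ are nonempty, being fair centers (minimizers of partiality over a nonempty vertex set), so no degenerate cases arise.

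First I would observe that the subgraph of $G_1 \Box G_2$ induced by $A_1 \times A_2$ is precisely the Cartesian product $G_1[A_1] \Box G_2[A_2]$. This follows directly from the definition of adjacency in the Cartesian product: for $(u,v),(x,y) \in A_1 \times A_2$, they are adjacent in $G_1 \Box G_2$ iff either $u=x$ and $vy \in E_2$, or $ux \in E_1$ and $v=y$; since $u,x \in A_1$ and $v,y \in A_2$, these are exactly the edges of $G_1[A_1] \Box G_2[A_2]$. Thus connectedness of the induced subgraph is equivalent to connectedness of this Cartesian product.

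It then remains to show that a Cartesian product of two nonempty graphs is connected iff both factors are connected. For the forward implication I would argue by contraposition: if, say, $G_1[A_1]$ is disconnected, partition $A_1 = B \sqcup C$ with no edges of $G_1$ between $B$ and $C$; then $B \times A_2$ and $C \times A_2$ partition $A_1 \times A_2$, and the adjacency rule forbids any edge between them (an edge either fixes the first coordinate, keeping a vertex within $B \times A_2$ or within $C \times A_2$, or moves along a $G_1$-edge, which cannot cross from $B$ to $C$), so the product is disconnected. The same argument applies with the roles of the factors reversed. For the converse, given $(u,v),(x,y) \in A_1 \times A_2$, I would concatenate two coordinate-wise paths: a path from $u$ to $x$ in $G_1[A_1]$ lifted to the fibre with second coordinate $v$, followed by a path from $v$ to $y$ in $G_2[A_2]$ lifted to the fibre with first coordinate $x$; both lie entirely in $A_1 \times A_2$, giving a walk between the two vertices.

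I do not expect any serious obstacle here: the whole content is the identification of the induced subgraph with a Cartesian product of induced subgraphs, after which the connectedness equivalence is the textbook fact about Cartesian products (cf.~\cite{key20}). The only point requiring a little care is ensuring $A_1, A_2 \neq \emptyset$, so that the product set is nonempty and the partition argument is meaningful; this is guaranteed since fair centers are never empty.
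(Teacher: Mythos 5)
Your proposal is correct and follows exactly the route the paper intends: the paper states this corollary without proof, as an immediate consequence of the preceding theorem $FC(S_1\times S_2)=FC(S_1)\times FC(S_2)$, and you simply make explicit the two standard facts it implicitly relies on (that the subgraph induced by $A_1\times A_2$ is $G_1[A_1]\,\Box\,G_2[A_2]$, and that a Cartesian product of nonempty graphs is connected iff both factors are). Your attention to the nonemptiness of the fair centers is a sound extra check, and all steps are valid.
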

\section{Conclusion}\label{con}
In this paper we could identify the fair sets of various classes of graphs such as complete, complete bipartite, odd cycle, wheel and symmetric even graphs. Identifying the fair sets of more classes of graphs shall be more interesting and challenging. We also proved some partial results in the case of Cartesian products. It will be interesting to find an expression for fair center of any set in the product graphs in terms of the fair centers in the factor graphs. It is also natural to study the behavior of fair centers in relation to other graph products. We identified block graphs as the only chordal graphs with connected fair sets. We could not find graphs other than block graphs, all of whose fair sets are connected. We conclude by posing the following conjecture.
\begin{conjecture}
A graph G is a block graph if and only if the induced subgraph of every fair set of G is connected.
\end{conjecture}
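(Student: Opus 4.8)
The forward implication is already in hand: Proposition~\ref{cor1} shows that in any block graph the induced subgraph of every fair set is connected. Thus the entire content of the conjecture lies in the converse, which I would attack in contrapositive form: if $G$ is not a block graph, then $G$ admits a fair set whose induced subgraph is disconnected.

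The plan is to exploit the forbidden-subgraph description of block graphs, namely that $G$ is a block graph if and only if it contains neither an induced diamond $K_4-e$ nor an induced cycle $C_\ell$ with $\ell\ge 4$. Accordingly a non-block graph $G$ splits into two cases. If $G$ is chordal, then being chordal and non-block it must contain an induced diamond, and this case is already settled by Theorem~\ref{block}, whose proof exhibits a disconnected fair set (the $FC(V(H))$ construction). Hence the only genuinely new situation is when $G$ is \emph{not} chordal, that is, when $G$ contains an induced cycle $C$ of length $\ell\ge 4$. The problem therefore reduces to manufacturing a disconnected fair set out of such a hole.

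For the hole I would take $C$ to be a \emph{shortest} induced cycle of length $\ge 4$, so that, as is standard for a shortest hole, the distances among its vertices in $G$ coincide with the cyclic distances; that is, $C$ is isometric in $G$. I would then imitate the cycle arguments already developed in this paper — the symmetric-even-graph argument for even $\ell$ and Theorem~\ref{oddcycle} for odd $\ell$ — by choosing $S\subseteq V(C)$ so that a pair of non-adjacent cycle vertices (the antipodal vertices when $\ell$ is even, or the two vertices eccentric to a consecutive pair when $\ell$ is odd) realise the minimum partiality, while every cycle vertex lying strictly between them on $C$ has strictly larger partiality. Isometry of $C$ guarantees that, restricted to the vertices of $C$, the fair center behaves exactly as in the standalone cycle, so the two chosen non-adjacent vertices both lie in $FC(S)$ and no $C$-path joins them inside the fair center.

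The hard part — and the reason the statement remains only a conjecture — is controlling the vertices of $V(G)\setminus V(C)$. Even when $C$ is isometric, an off-cycle vertex that is \emph{equidistant} from the chosen witnesses in $S$ can also have minimum partiality, and the collection of such vertices (the bisector of $S$) may be connected and may reattach the two separated cycle vertices through $G\setminus C$, destroying the disconnection. The crux of a complete proof is thus a structural lemma asserting that $S$ can be chosen — possibly by adjoining further vertices of $G$, or by replacing the distance witnesses so as to force strict inequalities — in such a way that this bisector cannot bridge the two sides of $FC(S)$. I expect this to be the main obstacle, since it forces one to reason about the global geodesic structure of $G$ rather than about the local cycle alone, and it is precisely the step that the chordal hypothesis eliminated in Theorem~\ref{block}.
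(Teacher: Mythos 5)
You should first be clear that this statement is posed in the paper as an open \emph{conjecture}: the paper proves only the forward direction (Proposition~\ref{cor1}) and the converse under the additional hypothesis of chordality (Theorem~\ref{block}), and explicitly states that graphs other than block graphs with all fair sets connected could not be found. So there is no proof in the paper to match, and your proposal --- which candidly stops at a ``structural lemma'' you cannot supply --- is a research program, not a proof. Your reduction is sound as far as it goes: the forward direction is Proposition~\ref{cor1}, the chordal non-block case is Theorem~\ref{block}, and what remains is exactly the non-chordal case, i.e.\ graphs containing a hole. The admitted gap (controlling the bisector of $S$ outside the hole) is genuine and is precisely why the statement remains open.

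Beyond the admitted gap, two concrete steps in your sketch fail. First, your isometry claim is wrong: a shortest \emph{cycle} is isometric, but a shortest \emph{hole} (induced cycle of length $\geq 4$) need not be, once triangles are present. The wheel $W_n$ is a counterexample from this very paper: its only hole is the rim $C_{n-1}$, and for $n\geq 7$ the rim is far from isometric since any two rim vertices are within distance $2$ through the hub. Second, and more damagingly, your plan of choosing $S\subseteq V(C)$ provably cannot work for wheels: by observation $A$ in the proof of Theorem~\ref{wheel}, every $S\subseteq V(C_{n-1})$ has the universal vertex $v_n$ in $FC(S)$, and since $v_n$ is adjacent to all vertices, every such fair center induces a \emph{connected} subgraph. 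The disconnected fair sets of $W_n$, such as $\{v_i,v_{i+2}\}=FC(\{v_n,v_i\})$, arise only from sets $S$ that contain the hub --- a genuinely off-cycle choice. This also shows that the off-cycle problem is worse than you state: vertices outside the hole can achieve \emph{strictly smaller} partiality than every cycle vertex (the hub has $f(v_n,S)=1$ against $f(u,S)=2$ for $u\in S$), evicting your two cycle witnesses from $FC(S)$ altogether rather than merely bridging them. Any successful attack on the conjecture must therefore select $S$ using the global structure of $G$ from the outset, not by transplanting the standalone-cycle constructions of Theorem~\ref{oddcycle} and the symmetric even case onto a hole.
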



\begin{thebibliography}{25}
\bibitem{key25} Kannan Balakrishnan, Bostjan Bresar, Manoj Changat, Sandi Klav\v{z}ar, Matjaz Kov\v{s}e, and
Ajitha R Subhamathi. On the generalized obnoxious center problem: antimedian subsets.
preprint, 2008.
\bibitem{key24} Kannan Balakrishnan, Manoj Changat, Sandi Klav\v zar, Joseph Mathews, Iztok Peterin,
GN Prasanth, and Simon Spacapan. Antimedian graphs. Australasian Journal of Combinatorics, 41:159, 2008.
12
\bibitem{key6} Oded Berman. Mean-variance location problems. Transportation Science, 24(4):287-293, 1990.
\bibitem{key7} Oded Berman and Edward H Kaplan. Equity maximizing facility location schemes. Trans-
portation Science, 24(2):137-144, 1990.
\bibitem{key21} P. Cappanera, A survey on obnoxious facility location problems, Technical Report TR-99-11, Dipartimento di Informatica, Uni. di Pisa, 1999
\bibitem{key22} Paola Cappanera, G Gallo, and F Maoli. Discrete facility location and routing of obnoxious
activities. Discrete Applied Mathematics, 133(1):3-28, 2003.
\bibitem{key23} Richard L Church and Robert S Garfinkel. Locating an obnoxious facility on a network.
Transportation Science, 12(2):107-118, 1978.
\bibitem{key13} Prabhir Das and B Rao. Center graphs of chordal graphs. In Proc. of the Seminar on Comb.\& Appl. in honour of Prof.S.S.Shrikhande, on his 65th birthday, pages 81-94. ISI, Calcutta,
1982.
\bibitem{key17} Reinhard Diestel. Graph Theory. Graduate texts in mathematics, 173:3, 2005.
\bibitem{key2} Erkut, E., 1993. Inequality measures for location problems.
Location Science 1, 199-217.
\bibitem{key5} Linton C Freeman. Centrality in social networks conceptual clarification. Social networks,
1(3):215-239, 1979.
\bibitem{key18} F G\"obel and H.J Veldman. Even graphs. Journal of graph theory, 10(2):225-239, 1986.
\bibitem{key4} Per Hage and Frank Harary. Eccentricity and centrality in networks. Social networks, 17(1):57-
63, 1995.
\bibitem{key1} SL Hakimi. Optimum distribution of switching centers in a communication network and some
related graph theoretic problems. Operations Research, 13(3):462-475, 1965.
\bibitem{key14} Pavol Hell, Sulamita Klein, Loana Tito Nogueira, and F\'abio Protti. Partitioning chordal graphs
into independent sets and cliques. Discrete Applied Mathematics, 141(1):185-194, 2004.
\bibitem{key15} Louis Ibarra. The clique-separator graph for chordal graphs. Discrete Applied Mathematics,
157(8):1737-1749, 2009.
\bibitem{key20} Wilfried Imrich, Sandi Klav\v zar, and Bojan Gorenec. Product graphs: Structure and recognition.
Wiley New York, 2000.
\bibitem{key16} Liying Kang and Yukun Cheng. The $p$-maxian problem on block graphs. Journal of combina-
torial optimization, 20(2):131-141, 2010.
\bibitem{key8} Rex K Kincaid and Oded Z Maimon. Locating a point of minimum variance on triangular
graphs. Transportation science, 23(3):216-219, 1989.
\bibitem{key9} MC L\~Aopez-de-los Mozos and JA Mesa. The 2-variance location problem in a tree network,
stud. Locational Anal, 11:73-87, 1997.
\bibitem{key11} Michael T Marsh and David A Schilling. Equity measurement in facility location analysis: A
review and framework. European Journal of Operational Research, 74(1):1-17, 1994.

\bibitem{key10} Juan A Mesa, Justo Puerto, and Arie Tamir. Improved algorithms for several network location
problems with equality measures. Discrete applied mathematics, 130(3):437-448, 2003.
\bibitem{key3} I Douglas Moon and Sohail S Chaudhry. An analysis of network location problems with distance
constraints. Management Science, 30(3):290-307, 1984.
\bibitem{rkkb-12} R Ram Kumar and B Kannan. Median sets and median number of a graph. ISRN Discrete
Mathematics, 2012, 2012.
\bibitem{chang} Hong-Gwa Yeh and Gerard J Chang. Centers and medians of distance-hereditary graphs.
Discrete Mathematics, 265(1-3):297-310, 2003.
\end{thebibliography}
\end{document}